\title{Pattern Masking for Dictionary Matching: Theory and Practice}
\author[1]{Panagiotis Charalampopoulos}
\author[2]{Huiping Chen}
\author[3]{Peter Christen}
\author[4]{Grigorios Loukides}
\author[5]{Nadia Pisanti}
\author[6,7]{Solon P.\ Pissis}
\author[8]{Jakub Radoszewski}
\affil[1]{School of Computing and Mathematical Sciences, Birkbeck, University of London, UK\\
\href{mailto:p.charalampopoulos@bbk.ac.uk}{p.charalampopoulos@bbk.ac.uk}}
\affil[2]{School of Computer Science, University of Birmingham, UK\\
    \href{mailto:h.chen.13@bham.ac.uk}{h.chen.13@bham.ac.uk}}
\affil[3]{Australian National University, Canberra, Australia\\
    \href{mailto:peter.christen@anu.edu.au}{peter.christen@anu.edu.au}}
\affil[4]{Department of Informatics, King's College London, UK\\
    \href{mailto:grigorios.loukides@kcl.ac.uk}{grigorios.loukides@kcl.ac.uk}}
\affil[5]{Universit\`a di Pisa, Italy\\
    \href{mailto:nadia.pisanti@unipi.it}{nadia.pisanti@unipi.it}}
 \affil[6]{CWI, Amsterdam, The Netherlands\\
    \href{mailto:solon.pissis@cwi.nl}{solon.pissis@cwi.nl}}
 \affil[7]{Vrije Universiteit, Amsterdam, The Netherlands}
 \affil[8]{Institute of Informatics, University of Warsaw, Poland\\
    \href{mailto:jrad@mimuw.edu.pl}{jrad@mimuw.edu.pl}}
\newcommand{\cO}{\mathcal{O}}
\def\dd{\mathinner{.\,.}}
\newcommand{\D}{\mathcal{D}}
\newcommand{\M}{\mathcal{M}}
 \newcommand{\PPSMz}{\textsc{PMDM-Size}\xspace}
 \newcommand{\PPSM}{\textsc{PMDM}\xspace}
 \newcommand{\MPPSM}{\textsc{MPMDM}\xspace}
 \newcommand{\MU}{\textsc{MU}\xspace}
 \newcommand{\PPSMw}{\textsc{Heaviest $k$-PMDM}\xspace}
 \newcommand{\PPSMzw}{$k$-\textsc{PMDM}\xspace}
 \newcommand{\kCl}{$k$-\textsc{Clique}\xspace}
 \newcommand{\HS}[1]{\textsc{Heaviest $#1$-Section}\xspace}
 \newcommand{\Greedy}{\textsc{Greedy $\tau$-PMDM}\xspace}
 \newcommand{\HV}{\textsc{HV}\xspace}
\newcommand{\R}{\mathcal{R}}
\newcommand{\TT}{\mathcal{T}}
\newtheorem{theorem}{Theorem}
\newtheorem{remark}{Remark}
\newtheorem{claim}{Claim}
\newtheorem{lemma}{Lemma}
\newtheorem{corollary}{Corollary}
  \newcommand{\defproblem}[3]{
   \vspace{2mm}
 \noindent\fbox{
   \begin{minipage}{0.96\textwidth}
   \textsc{#1}\\
   {\bf{Input:}} #2  \\
   {\bf{Output:}} #3
   \end{minipage}
   }
   \vspace{2mm}
 }
\begin{document}
\date{}
\maketitle
\thispagestyle{empty} 

\begin{abstract}
Data masking is a common technique for sanitizing sensitive data maintained in database systems which is becoming increasingly important in various application areas, such as in record linkage of personal data. This work formalizes the Pattern Masking for Dictionary Matching (\PPSM) problem: given a dictionary $\D$ of $d$ strings, each of length $\ell$, a query string $q$ of length $\ell$, and a positive integer $z$, we are asked to compute a smallest set $K\subseteq\{1,\ldots,\ell\}$, so that if  $q[i]$ is replaced by a wildcard for all $i\in K$, then $q$ matches at least $z$ strings from $\D$. Solving {\PPSM} allows providing data utility guarantees as opposed to existing approaches.

We first show, through a reduction from the well-known $k$-Clique problem, that a decision version of the \PPSM problem is NP-complete, even for binary strings. We thus approach the problem from a more practical perspective. We show a combinatorial $\cO((d\ell)^{|K|/3}+d\ell)$-time and $\cO(d\ell)$-space algorithm for \PPSM for $|K|=\cO(1)$. 
In fact, we show that we cannot hope for a faster combinatorial algorithm, unless the combinatorial $k$-Clique hypothesis fails [Abboud et al., SIAM J.~Comput.~2018; Lincoln et al., SODA 2018]. 
Our combinatorial algorithm, executed with small $|K|$, is the backbone of a greedy heuristic that we propose. Our experiments on real-world and synthetic datasets show that our heuristic finds nearly-optimal solutions in practice and is also very efficient. We also generalize this algorithm for the problem of masking multiple query strings simultaneously so that every string has at least $z$ matches in $\D$.

\PPSM can be viewed as a generalization of the decision version of the dictionary matching with mismatches problem: by querying a \PPSM data structure with string $q$ and $z=1$, one obtains the minimal number of mismatches of $q$ with any string from $\D$. The query time or space of all known data structures for the \emph{more restricted} problem of dictionary matching with at most $k$ mismatches incurs some exponential factor with respect to $k$.
A simple exact algorithm for \PPSM runs in time $\cO(2^\ell d)$. We present a data structure for \PPSM that answers queries over $\D$ in time $\cO(2^{\ell/2}(2^{\ell/2}+\tau)\ell)$ and requires space $\cO(2^{\ell}d^2/\tau^2+2^{\ell/2}d)$, for any parameter $\tau\in[1,d]$.

We complement our results by showing a two-way polynomial-time reduction between \PPSM and the Minimum Union problem [Chlamt\'{a}\v{c} et al., SODA 2017]. This gives a polynomial-time $\cO(d^{1/4+\epsilon})$-approximation algorithm for \PPSM, which is tight under a plausible complexity conjecture.
\end{abstract}

\newpage
\setcounter{page}{1}

\section{Introduction}\label{sec:intro}

Let us start with a true incident to illustrate the essence of the computational problem formalized in this work.
In the Netherlands, water companies bill the non-drinking and drinking water separately. The 6th author of this paper had direct debit for the former but not for the latter. When he tried to set up the direct debit for the latter, he received the following masked message by the company: 

{\small
\begin{lstlisting}
Is this you?
Initial: S.  Name: P****s  E-mail address: s******13@g***l.com 
Bank account number: NL10RABO********11.
\end{lstlisting}}

\noindent The rationale of the data masking is: the client should be able to identify themselves to help the companies \emph{link} the client's profiles, without inferring the  identity of any other client via a \emph{linking  attack}~\cite{DBLP:journals/csur/FungWCY10,DBLP:journals/tkde/Samarati01}, so that clients' privacy is preserved.\footnote{In linking attacks, the adversary is a  data recipient who uses the released data of an individual together with  publicly available data, or with background knowledge, to infer the individual's identity.} Thus, the masked version of the data is required to conceal as few symbols as possible, so that the client can recognize their data, but also to correspond to a sufficient number of other clients, so that it is hard for a successful linking attack to be performed. 

This requirement can be formalized as the Pattern Masking for Dictionary Matching (\PPSM) problem: Given a dictionary $\D$ of $d$ strings, each of length $\ell$, a query string $q$ of length $\ell$, and a positive integer $z$, {\PPSM} asks to compute a smallest set $K\subseteq\{1,\ldots,\ell\}$, so that if $q[i]$, for all $i\in K$, is replaced by a wildcard, $q$ matches at least $z$ strings from $\D$. The {\PPSM} problem applies data masking, a common operation to sanitize personal data maintained in database systems~\cite{san3,san1,san2}. In particular, {\PPSM} lies at the heart of record linkage of databases containing personal data~\cite{Christen2020lsd,sigmodrec18,Kum2014jamia,chi18,podsbestpaper18,vatsalan2017hbbdt}, which is the main application we consider in this work. 

\emph{Record linkage} is the task of identifying records that refer to the same entities across databases, in situations where no entity identifiers are available in these databases~\cite{Christen2012springer,Herzog2007springer,acmsurv20}. This task is of high importance in various application domains featuring personal data, ranging from the health sector and social science research, to national statistics and crime and fraud detection~\cite{Christen2020lsd,Karapiperis2019dpd}. In a typical setting, the task is to link two databases that contain names or other attributes, known collectively as quasi-identifiers (QIDs)~\cite{Vatsalan2014cikm}. The similarity between each pair of records (a record from one of the databases and a record from the other) is calculated with respect to their values in QIDs, and then all compared record pairs are classified into matches (the pair is assumed to refer to the same person), non-matches (the two records in the pair are assumed to refer to different people), and potential matches (no decision about whether the pair is a match or non-match can be made)~\cite{Christen2012springer,Herzog2007springer}.

Unfortunately, potential matches happen quite often~\cite{rlsurv}. A common approach~\cite{chi18,podsbestpaper18} to deal with potential matches is to conduct a manual clerical review, where a domain expert looks at the attribute values in record pairs and then makes a manual match or non-match decision. At the same time, to comply with policies and legislation, one needs to prevent domain experts from inferring the identity of the people represented in the manually assessed record pairs~\cite{chi18}. The challenge is to achieve desired data protection/utility guarantees; i.e.~enabling a domain expert to make good decisions without inferring peoples' identities. 

To address this challenge, we can solve {\PPSM} twice, for a potential match $(q_1,q_2)$. The first time we use as input the query string $q_1$ and a reference dictionary (database) $\mathcal{D}$ containing personal records from a sufficiently large population (typically, much larger than the databases to be linked). The second time, we use as input $q_2$ instead of $q_1$. Since each masked $q$ derived by solving {\PPSM} matches at least $z$ records in $\mathcal{D}$, the domain expert would need to distinguish between at least $z$ individuals in $\mathcal{D}$ to be able to infer the identity of the individual corresponding to the masked string. The underlying assumption is that $\mathcal{D}$ contains one record per individual. Also, some wildcards from one masked string can be superimposed on another to ensure that the expert does not gain more knowledge from combining the two strings, and the resulting strings would still match at least $z$ records in $\mathcal{D}$. Thus, by solving {\PPSM} in this setting, we provide privacy guarantees alike $z$-map~\cite{sweeneythesis}; a variant of the well-studied $z$-anonymity~\cite{10.1145/275487.275508} privacy
model.\footnote{The notation used for such privacy models is generally $k$ instead of $z$, e.g.\ $k$-anonymity~\cite{Samarati1998protecting,sweeneythesis}.} In $z$-map, each record of a dataset must match at least $z$ records in a reference dataset, from which the dataset is derived. In our setting, we consider a  pattern that is not necessarily contained in the reference dataset. Offering such privacy is desirable in real record linkage systems where databases containing personal data are being linked~\cite{Christen2020lsd,Kum2014jamia,vatsalan2017hbbdt}. On the other hand, since each masked $q$ contains the minimum number of wildcards, the domain expert is still able to use the masked $q$ to meaningfully classify a record pair as a match or as a non-match.

Offering such utility is again desirable in record linkage systems~\cite{chi18}. Record linkage is an important application for our techniques, because no existing approach can provide privacy and utility guarantees when releasing linkage results to domain experts~\cite{soups19}. In particular, existing approaches~\cite{Kum2014jamia,soups19} recognize the need to offer privacy by preventing the domain expert from distinguishing between a small number of individuals, but they  provide \emph{no algorithm} for offering such privacy, let alone an algorithm offering utility guarantees as we do.  

A secondary application where PMDM is of importance is \emph{query term dropping},
an information retrieval task that seeks to drop keywords (terms) from a query, so that the remaining keywords retrieve a sufficiently large number of documents. This task is performed by search engines, such as Google~\cite{googlepatent}, and by e-commerce platforms such as e-Bay~\cite{ebaypatent}, to improve users' experience~\cite{ntoulas,sigir17} by making sufficiently many search results available to users. For example, e-Bay applies query term dropping, removing one term, in our test query: 

{\small
\begin{lstlisting}
Query: vacuum database cleaner 
Query results: 0 results found for vacuum database cleaner
              42 results found for vacuum cleaner 
\end{lstlisting}}
\noindent We could perform query term dropping by solving {\PPSM} in a setting where  strings in a dictionary correspond to  document terms and a query string  corresponds to a user's query. Then, we provide the user with the masked query, after removing all wildcards, and with its matching strings from the  dictionary. Two remarks are in order for this application. First, we consider a setting where the keyword order matters. This occurs, for example, when using \emph{phrase search} in Google.\footnote{An indicative example of a query in Google is ``free blue tv'' which yielded $7050$ results, as blue tv is a well-known app, whereas the query ``blue free tv'' yielded only $5$  results.} Second, since the dictionary may contain strings of different length, PMDM should be applied only to the dictionary strings that have the same length as the query string. 

Query term dropping is a relevant  application for our techniques, because existing techniques~\cite{sigir17} do not minimize the number of dropped terms. Rather, they drop keywords randomly, which may unnecessarily shorten the query, or drop keywords based on custom rules, which is not sufficiently generic to deal with all queries. More generally, our techniques can be applied to drop terms from any top-$z$ database query~\cite{topksurv} to ensure there are $z$ results in the query answer.

\paragraph{Related Algorithmic Work.} Let us denote the wildcard symbol by $\star$ and provide a brief overview of works related to \PPSM, the main problem considered in this paper.
\begin{itemize}
\item \emph{Partial Match}: Given a dictionary $\D$ of $d$ strings over an alphabet $\Sigma=\{0,1\}$, each of length $\ell$, and a string $q$ over $\Sigma \sqcup \{\star\}$ of length $\ell$, the problem asks whether $q$ matches any string from $\D$. This is a well-studied problem~\cite{DBLP:conf/stoc/BorodinOR99,DBLP:conf/icalp/CharikarIP02,DBLP:journals/jcss/JayramKKR04,DBLP:journals/jcss/MiltersenNSW98,DBLP:journals/siamcomp/Patrascu11,DBLP:journals/siamcomp/PatrascuT09,DBLP:journals/siamcomp/Rivest76}. Patrascu~\cite{DBLP:journals/siamcomp/Patrascu11} showed that any data structure for the Partial Match problem with cell-probe complexity $t$ must use space $2^{\Omega(\ell/t)}$, assuming the word size is $\cO(d^{1-\epsilon}/t)$, for any constant $\epsilon>0$. The key difference to \PPSM is that the wildcard positions in the query strings are fixed. 
\item \emph{Dictionary Matching with $k$-errors}: A similar line of research to that of Partial  Match has been conducted under the Hamming and edit distances, where, in this case, $k$ is the maximum allowed distance between the query string and a dictionary  string~\cite{DBLP:conf/cpm/Belazzougui09,DBLP:journals/algorithmica/BelazzouguiV16,DBLP:journals/ipl/BrodalV00,DBLP:journals/algorithmica/ChanLSTW10,DBLP:conf/stoc/ColeGL04,DBLP:journals/jal/YaoY97}. 
The structure of Dictionary Matching with $k$-errors is very similar to Partial Match as each wildcard in the query string gives $|\Sigma|$ possibilities for the corresponding symbol in the dictionary strings. On the other hand, in Partial Match the wildcard positions are fixed.

The \PPSM problem is a generalization of the decision version of the Dictionary Matching with $k$-errors problem (under Hamming distance): by querying a data structure for \PPSM with string $q$ and $z=1$, one obtains the minimum number of mismatches of $q$ with any string from $\D$, which suffices to answer the decision version of the Dictionary Matching with $k$-errors problem. The query time or space of all known data structures for Dictionary Matching with $k$-mismatches incurs some exponential factor with respect to $k$.
In~\cite{DBLP:conf/soda/Cohen-AddadFS19}, Cohen{-}Addad et al. showed that, in the pointer machine model, for the reporting version of the problem, one cannot avoid exponential dependency on $k$ either in the space or in the query time. 
In the word-RAM model, Rubinstein showed that, conditional on the Strong Exponential Time Hypothesis~\cite{DBLP:conf/iwpec/CalabroIP09}, any data structure that can be constructed in time polynomial in the total size $||\D||$ of the strings in the dictionary cannot answer queries in time strongly sublinear in $||\D||$. 
\end{itemize}

We next provide a brief overview of other algorithmic works related to \PPSM.

\begin{itemize}
\item \emph{Dictionary Matching with $k$-wildcards}: Given a dictionary $\D$ of total size $N$ over an alphabet $\Sigma$ and a query string $q$ of length $\ell$ over $\Sigma \sqcup \{\star\}$ with up to $k$ wildcards, the problem asks for the set of matches of $q$ in $\D$. This is essentially a parameterized variant of the Partial Match problem. The seminal paper of Cole et al.~\cite{DBLP:conf/stoc/ColeGL04} proposed a data structure occupying $\cO(N\log^k\!N)$ space allowing for $\cO(\ell\!+\!2^k\!\log\log N\!+\!|\textsf{output}|)$-time querying.
This data structure is based on recursively computing a heavy-light decomposition of the suffix tree and copying the subtrees hanging off light children.
Generalizations and slight improvements have been proposed in~\cite{DBLP:journals/mst/BilleGVV14},~\cite{DBLP:journals/tcs/LewensteinMRT14}, and~\cite{DBLP:conf/esa/GawrychowskiLN14}.
In~\cite{DBLP:journals/mst/BilleGVV14} the authors also proposed an alternative data structure that instead of a $\log^k\!\!N$ factor in the space complexity has a multiplicative $|\Sigma|^{k^2}$ factor. Nearly-linear-sized data structures that essentially try all different combinations of letters in the place of wildcards and hence incur a $|\Sigma|^k$ factor in the query time have been proposed in~\cite{DBLP:journals/mst/BilleGVV14,LNV14-stacs}. On the lower bound side, Afshani and Nielsen~\cite{AN16icalp} showed that, in the pointer machine model, essentially any data structure for the problem in scope must have exponential dependency on $k$ in either the space or the query time, explaining the barriers hit by the existing approaches.
\item \emph{Enumerating Motifs with $k$-wildcards}: Given an input string $s$ of length $n$ over an alphabet $\Sigma$ and positive integers $k$ and $z$, this problem asks to enumerate all motifs over $\Sigma \sqcup \{\star\}$ with up to $k$ wildcards that occur at least $z$ times in $s$. As the size of the output is exponential in $k$, the enumeration problem has such a lower bound. 
Several approaches exist for efficient motif enumeration, all aimed at reducing the impact of the output's size: efficient indexing to minimize the output delay~\cite{takeakipolydelay,GM+18}; exploiting a hierarchy of wildcards positions according to the number of occurrences~\cite{tcs-mask}; defining a subset of motifs of fixed-parameter tractable size (in $k$ or $z$) that can generate all the others~\cite{PCGS03,PCGS05}, or defining maximality notions meaning a subset of the motifs that implicitly include all the others~\cite{maxmot09,madmax11}.
\end{itemize}

\paragraph{Our Contributions.} We consider the word-RAM model of computations with $w$-bit machine words, where $w=\Omega(\log (d\ell))$, for stating our results. We make the following contributions:
\begin{enumerate}
    \item (\cref{sec:hardness}) A reduction from the $k$-Clique problem to a decision version of the \PPSM problem, which implies that \PPSM is NP-hard, even for strings over a binary alphabet. The reduction also implies conditional hardness of the \PPSM problem. We also present a generalized reduction from the $(c,k)$-Hyperclique problem~\cite{LWW-soda18}.
    \item (\cref{sec:exact}) A combinatorial $\cO((d\ell)^{k/3}+d\ell)$-time and $\cO(d\ell)$-space algorithm for \PPSM if $k=|K|=\cO(1)$, which is optimal if the combinatorial $k$-Clique hypothesis is true. 
    \item (\cref{sec:many_strings}) We consider a generalized version of \PPSM, referred to as \MPPSM: we are given a collection $\M$ of $m$ query strings (instead of one query string) and we are asked to compute a smallest set $K$ so that, for every $q$ from $\M$, if $q[i]$, for all $i\in K$, is replaced by a wildcard, then $q$ matches at least $z$ strings from dictionary $\D$. We show an $\cO((d\ell)^{k/3}z^{m-1}+d\ell)$-time algorithm for \MPPSM, for $k=|K|=\cO(1)$ and $m=\cO(1)$.
    \item (\cref{sec:DS}) A data structure for \PPSM that answers queries over $\D$ in $\cO(2^{\ell/2}(2^{\ell/2}+\tau)\ell)$ time and requires space $\cO(2^{\ell}d^2/\tau^2+2^{\ell/2}d)$, for any parameter $\tau\in[1,d]$.
    \item (\cref{sec:approx}) A polynomial-time $\cO(d^{1/4+\epsilon})$-approximation algorithm for \PPSM, which we show to be tight under a plausible complexity conjecture.
    \item (\cref{sec:greedy}) A greedy heuristic based on the $\cO((d\ell)^{k/3}+d\ell)$-time algorithm.
    \item (\cref{sec:exp}) An extensive experimental evaluation on real-world and synthetic data demonstrating that our heuristic finds nearly-optimal solutions in practice and is also very efficient. In particular, our heuristic finds optimal or nearly-optimal solutions for \PPSM on a dataset with six million records in less than 3 seconds.
\end{enumerate}
\noindent We conclude this paper with a few open questions in~\cref{sec:finale}. 

This paper is an extended version of a paper that was presented at ISAAC 2021~\cite{charalampopoulos_et_al:LIPIcs.ISAAC.2021.65}. 

\section{Definitions and Notation}\label{sec:prel}
\paragraph{Strings.} An \emph{alphabet} $\Sigma$ is a finite nonempty set whose elements are called \emph{letters}. 
We assume throughout an integer alphabet $\Sigma=[1,|\Sigma|]$. Let $x=x[1]\cdots x[n]$ be a \emph{string} of length $|x|=n$ over $\Sigma$. For two indices $1 \leq i \leq j \leq n$, $x[i\dd j]=x[i]\cdots x[j]$ is the \emph{substring} of $x$ that starts at position $i$ and ends at position $j$ of $x$. By $\varepsilon$ we denote the \emph{empty string} of length $0$. A \emph{prefix} of $x$ is a substring of $x$ of the form $x[1\dd j]$, and a \emph{suffix} of $x$ is a substring of $x$ of the form $x[i\dd n]$. A \emph{dictionary} is a collection of strings. We also consider alphabet $\Sigma_{\star}=\Sigma \sqcup \{\star\}$, where $\star$ is a {\em wildcard} letter that is not in $\Sigma$ and {\em matches} all letters from $\Sigma_{\star}$. Then, given a string $x$ over $\Sigma_{\star}$ and a string $y$ over $\Sigma$ with $|x|=|y|$, we say that $x$ {\em matches} $y$ if and only if $x[i] = y[i] \mbox{\ or\ } x[i] = \star, \mbox{\ for all\ } 1 \leq i \leq |x|$. Given a string $x$ of length $n$ and a set $S\subseteq\{1,\ldots,n\}$, we denote by $x_S=x\otimes S$ the string obtained by first setting $x_S=x$ and then $x_S[i]=\star$, for all $i\in S$. We then say that $x$ is \emph{masked} by $S$.

The main problem considered in this paper is the following.

\defproblem{Pattern Masking for Dictionary Matching (\PPSM)}{A dictionary $\D$ of $d$ strings, each of length $\ell$, a string $q$ of length $\ell$, and a positive integer $z$.}{A smallest set $K\subseteq\{1,\ldots,\ell\}$ such that $q_K\!=\!q\!\otimes\! K$ matches at least $z$ strings from $\D$.}

We refer to the problem of computing only the size $k$ of a smallest set $K$ as \PPSMz.
We also consider the data structure variant of the \PPSM problem in which $\D$ is given for preprocessing, and $q,z$ queries are to be answered on-line.
Throughout, we assume that $k\geq 1$ as the case $k=0$ corresponds to the well-studied dictionary matching problem for which there exists a classic optimal solution~\cite{DBLP:journals/cacm/AhoC75}.
We further assume $z\leq d$; otherwise the \PPSM has trivially no solution. In what follows, we use $N$ to denote $d\ell$.

\paragraph{Tries.} Let $\M$ be a finite set containing $m>0$ strings over $\Sigma$. The \emph{trie} of $\M$, denoted by $\R(\M)$, contains a node for every distinct prefix of a string in $\M$; the root node is $\varepsilon$; the set of leaf nodes is $\M$; and edges are of the form $(u,\alpha,u\alpha)$, where $u$ and $u\alpha$ are nodes and $\alpha\in\Sigma$ is the label. The \emph{compacted trie} of $\M$, denoted by $\TT(\M)$, contains the root, the branching nodes, and the leaf nodes of $\R(\M)$. Each maximal branchless path segment from $\R(\M)$ is replaced by a single edge, and a fragment of a string $M\in \M$ is used to represent the label of this edge in $\cO(1)$ space. The size of $\TT(\M)$ is thus $\cO(m)$. 
The most well-known example of a compacted trie is the suffix tree of a string: the compacted trie of all the suffixes of the string~\cite{DBLP:conf/focs/Weiner73}. 
To access the children of a trie node by the first letter of their edge label in $\cO(1)$ time we use perfect hashing~\cite{DBLP:journals/jacm/FredmanKS84}.
In this case, the claimed complexities hold \emph{with high probability} (w.h.p., for short), that is, with probability at least $1-N^{-c}$ (recall that $N=d\ell$), where $c > 0$ is a constant fixed at construction time. Assuming that the children of every trie node are sorted by the first letters of their edge labels, randomization can be avoided at the expense of a $\log |\Sigma|$ factor incurred by binary searching for the appropriate child.

\section{NP-hardness and Conditional Hardness of \PPSMz}\label{sec:hardness}
We show that the following decision version of \PPSMz is NP-complete. 

\defproblem{\PPSMzw}{A dictionary $\D$ of $d$ strings, each of length $\ell$, a string $q$ of length $\ell$, and positive integers $z\leq d$ and $k\leq \ell$.}{Is there a set $K\subseteq\{1,\ldots,\ell\}$ of size $k$, such that $q_K\!=\!q\!\otimes\! K$ matches at least $z$ strings from $\D$?}

Our reduction is from the well-known NP-complete \kCl problem~\cite{DBLP:books/daglib/p/Karp10}: Given an undirected graph $G$ on $n$ nodes and a positive integer $k$, decide whether $G$ contains a clique of size $k$ (a {\em clique} is a subset of the nodes of $G$ that are pairwise adjacent).

\begin{theorem}\label{thm:red}
  Any instance of the \kCl problem for a graph with $n$ nodes and $m$ edges can be reduced in $\cO(nm)$ time to a \PPSMzw instance with $\ell=n$, $d=m$ and $\Sigma=\{\mathtt{a},\mathtt{b}\}$.
\end{theorem}
\begin{proof}
Let $G=(V,E)$ be an undirected graph on $n=|V|$ nodes numbered $1$ through $n$, in which we are looking for a clique of size $k$. We reduce \kCl to \PPSMzw as follows. 
Consider the alphabet $\{\texttt{a},\texttt{b}\}$.
Set $q=\texttt{a}^n$, and for every edge $(u,v)\in E$ such that $u<v$, add string $\texttt{a}^{u-1}\texttt{b}\texttt{a}^{v-u-1}\texttt{b}\texttt{a}^{n-v}$ to $\D$. Set $z=k(k-1)/2$.
Then $G$ contains a clique of size $k$, if and only if \PPSMzw returns a positive answer.
This can be seen by the fact that cliques of size $k$ in $G$ are in one-to-one correspondence with subsets $K\subseteq \{1,\ldots,n\}$ of size $k$ for which $q_K$ matches $z$ strings from $\D$: the elements of $K$ correspond to the nodes of a clique and the $z$ strings correspond to its edges. 
\PPSMzw is clearly in NP and the result follows. 
\end{proof}

An example of the reduction from \kCl to \PPSMzw is shown in Figure~\ref{fig:redu-kclique}.

\begin{figure}[!t]
\begin{center}
\includegraphics[width=0.99\textwidth]{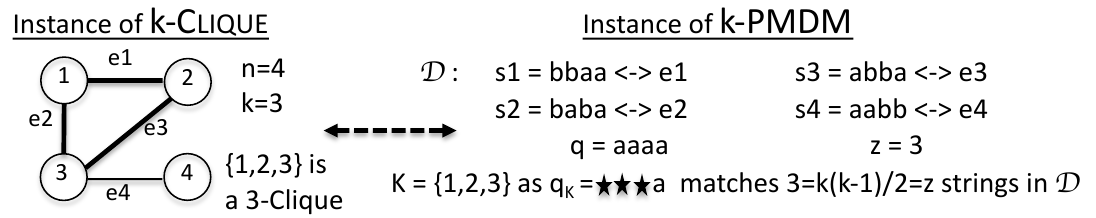}
\end{center}
\caption{An example of the reduction from \kCl to \PPSMzw. The solution for both is $\{1,2,3\}$ as shown. Note that, for $k=4$, the instance of $4$-\textsc{PMDM} would need $z=6$ matches; neither this many matches can be found in $\D$ nor a $4$-clique can be found in the graph.
\label{fig:redu-kclique}}
\end{figure}

\begin{corollary}\label{cor:NP}
\PPSMzw is NP-complete for strings over a binary alphabet.
\end{corollary}

Any algorithm solving \PPSMz can be trivially applied to solve \PPSMzw. 

\begin{corollary}
\PPSMz is NP-hard for strings over a binary alphabet.
\end{corollary}

\begin{remark}
Given an undirected graph $G$, an {\em independent set} is a subset of nodes of $G$ such that no two distinct nodes of the subset are adjacent.
Let us note that the problem of computing a maximum clique in a graph $G$, which is equivalent to that of computing the maximum independent set in the complement of $G$, cannot be $n^{1-\epsilon}$-approximated in polynomial time, for any $\epsilon>0$, unless $\text{P}=\text{NP}$~\cite{clique-hardapprox,DBLP:journals/toc/Zuckerman07}. In Section~\ref{sec:approx}, we show a polynomial-time $\cO(d^{1/4+\epsilon})$-approximation algorithm for \PPSM.
We remark that this algorithm and Theorem~\ref{thm:red} do not contradict the inapproximability results for the maximum clique problem, since our reduction from \kCl to \PPSMzw cannot be adapted to a reduction from maximum clique to \PPSMz.
\end{remark}

Theorem~\ref{thm:red} shows that solving \PPSMzw efficiently even for strings over a binary alphabet would imply a breakthrough for the \kCl problem for which it is known that, in general, no fixed-parameter tractable algorithm with respect to parameter $k$ exists unless the Exponential Time Hypothesis (ETH) fails~\cite{CHKX06,DBLP:journals/jcss/ImpagliazzoP01}. That is, \kCl has no $f(k)n^{o(k)}$ time algorithm, and is thus W[1]-complete (again, under the ETH hypothesis). On the upper bound side, \kCl can be trivially solved in $\cO(n^k)$ time (enumerating all subsets of nodes of size $k$), and this can be improved to $\cO(n^{\omega k/3})$ time for $k$ divisible by $3$ using square matrices multiplication ($\omega$ is the exponent of square matrix multiplication). However, for general $k\geq 3$ and any constant $\epsilon>0$, the \kCl hypothesis states that there is no $\cO(n^{(\omega/3-\epsilon) k})$-time algorithm and no combinatorial $\cO(n^{(1-\epsilon)k})$-time algorithm~\cite{DBLP:journals/siamcomp/AbboudBW18,LWW-soda18,ICM18}.
Thus, conditional on the \kCl hypothesis, and since $d\ell = nm = \cO(n^3)$ and $\ell=n$ (Theorem~\ref{thm:red}), we cannot hope to devise a combinatorial algorithm for \PPSMzw with runtime $\cO((d\ell)^{(1-\epsilon)k/3})$ or $\cO(\ell^{(1-\epsilon)k})$ for any constant $\epsilon>0$.
In~\cref{sec:exact}, we show a combinatorial $\cO(d\ell+ \min \{(d \ell)^{k/3} , \ell^k\})$-time algorithm, for constant $k \geq 3$, for the optimization version of \PPSMzw (seeking to maximize the matches), which can then be trivially applied to solve \PPSMzw in the same time complexity, thus matching the above conditional lower bound.
Additionally, under the \kCl hypothesis, even with the aid of algebraic techniques, one cannot hope for an algorithm for \PPSMzw with runtime $\cO((d\ell)^{(\omega/9-\epsilon) k})$ or $\cO(\ell^{(\omega/3-\epsilon) k})$, for any constant $\epsilon>0$.

In fact, as we show next, by reducing from the \textsc{$(c,k)$-Hyperclique} problem,
which is not known to benefit from fast matrix multiplication~\cite{LWW-soda18}, we obtain stronger conditional lower bounds for some values of $d$ and~$\ell$.

A {\em hypergraph} $H$ is a pair $(V,E)$, where $V$ is the set of nodes of $H$ and $E$ is a set of non-empty subsets of $V$, called {\em hyperedges}. The \textsc{$(c,k)$-Hyperclique} problem is defined as follows: Given a hypergraph $H=(V,E)$ such that all of its hyperedges have size $c$, does there exist a set $S$ of $k>c$ nodes in $V$ so that every subset of $c$ nodes from $S$ is a hyperedge? We call set $S$ a \emph{$(c,k)$-hyperclique} in $H$.
We will reduce \textsc{$(c,k)$-Hyperclique} to \PPSMzw in time $\cO(|V|\cdot|E|)$.

\begin{theorem}\label{thm:red2}
  Any instance of the \textsc{$(c,k)$-Hyperclique} problem for a hypergraph with $n$ nodes and $m$ hyperedges each of size $c$ can be reduced in $\cO(nm)$ time to a \PPSMzw instance with $\ell=n$, $d=m$ and $\Sigma=\{\mathtt{a},\mathtt{b}\}$.
\end{theorem}
\begin{proof}
We reduce \textsc{$(c,k)$-Hyperclique} to \PPSMzw as follows. 
Consider the alphabet $\{\texttt{a},\texttt{b}\}$.
Set $q=\texttt{a}^n$, and for every hyperedge $e_i\in E$, add  the binary string $x_i$ to $\D$ such that $x_i[j]=\texttt{b}$ if and only if $j\in e_i$. Set $z=\binom{k}{c}$.
Then $H$ contains a $(c,k)$-hyperclique if and only if \PPSMzw returns a positive answer.
This can be seen by the fact that $(c,k)$-hypercliques $H$ are in one-to-one correspondence with subsets $K\subseteq \{1,\ldots,n\}$ of size $k$ for which $q_K$ matches $z$ strings from $\D$: the elements of $K$ correspond to the nodes of a $(c,k)$-hyperclique and the $z$ strings correspond to its hyperedges. 
\end{proof}

The \textsc{$(c,k)$-Hyperclique} hypothesis states that there is no $\cO(n^{(1-\epsilon)k})$-time algorithm, for any $k>c>2$ and  
$\epsilon>0$, that solves the \textsc{$(c,k)$-Hyperclique} problem. For a discussion on the plausibility of this hypothesis and for more context, we refer the reader to~\cite[Section 7]{LWW-soda18}. Theorem~\ref{thm:red2} shows that solving \PPSMzw efficiently even for strings over a binary alphabet would imply a breakthrough for the \textsc{$(c,k)$-Hyperclique} problem.
In particular, assuming that the \textsc{$(3,k)$-Hyperclique} hypothesis is true, due to~\cref{thm:red2}, and since $d\ell = nm = \cO(n^4)$, we cannot hope to devise an algorithm for \PPSMzw requiring time $\cO((d \ell)^{(1-\epsilon)k/4})$ or $\cO(\ell^{(1-\epsilon)k})$, for any $k>3$ and $\epsilon >0$.

\section{Exact Algorithms for a Bounded Number $k$ of Wildcards}\label{sec:exact}

We consider the following problem, which we solve by exact algorithms. 
These algorithms  will form the backbone of our effective and efficient heuristic for the \PPSM problem (see~\cref{sec:greedy}). 

\defproblem{\PPSMw}{A dictionary $\D$ of $d$ strings, each of length $\ell$, a string $q$ of length $\ell$, and a positive integer $k\leq \ell$.}{A set $K\subseteq\{1,\ldots,\ell\}$ of size $k$ such that $q_K\!=\!q\!\otimes\! K$ matches the maximum number of strings in $\D$.}

We will show the following result, which we will employ to solve the \PPSM problem.

\begin{theorem}\label{the:kWMDM}
\PPSMw for $k\!=\!\cO(1)$ can be solved in $\cO(N\!+\!\min\{N^{k/3},\ell^k\})$ time, where $N=d\ell$.
\end{theorem}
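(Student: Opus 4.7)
I would cast \PPSMw as a coverage problem on diff sets. For each $s_i\in\D$, compute the diff set $D_i=\{j\in[1,\ell]:s_i[j]\neq q[j]\}$ in $\cO(\ell)$ time per string, and discard $s_i$ with $|D_i|>k$ (no $k$-subset of $[1,\ell]$ contains such $D_i$). Total preprocessing time is $\cO(N)$. Group the surviving strings by $D_i$ and store the count $c_D$ of each distinct diff set in a hash map, e.g.\ keyed by Karp--Rabin fingerprints of the sorted index sequences. Then $f(K):=\sum_{D\subseteq K}c_D$ equals the number of strings matched by $q\otimes K$; for any $K$ of size $k$, this quantity is computable in $\cO(2^k)=\cO(1)$ time by looking up its $\leq 2^k$ subsets. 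The problem reduces to finding a $k$-subset $K\subseteq[1,\ell]$ maximizing $f(K)$, and I would do so by running two algorithms and returning the better one.

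The \emph{first algorithm} achieves $\cO(\ell^k)$ by enumerating all $\binom{\ell}{k}=\cO(\ell^k)$ candidate $K$'s (via Gray-code-style combination enumeration) and evaluating $f(K)$ for each in $\cO(1)$. The \emph{second algorithm} achieves $\cO(N^{k/3})$ via witness enumeration: let $t=\lceil k/3\rceil$, enumerate all $\cO(d^t)$ $t$-tuples of filtered dictionary strings, for each compute $U=\bigcup_j D_{i_j}$ in $\cO(1)$, skip tuples with $|U|>k$, and otherwise enumerate the $\cO(\ell^{k-|U|})$ extensions of $U$ to a $k$-subset $K=U\cup E$, evaluating $f(K)$ on each. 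The key observation is that we may restrict attention to tuples whose union $U$ already has size at least $2k/3$, so that $k-|U|\leq k/3$ and the per-tuple extension cost is at most $\ell^{k/3}$, giving a total of $\cO(d^t\cdot\ell^{k/3})=\cO((d\ell)^{k/3})=\cO(N^{k/3})$.

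The main obstacle is justifying the witness-existence claim underpinning the second algorithm: that for every optimal $K$ there exist $t$ covered diff sets with union of size at least $2k/3$. I expect to prove this via a greedy argument, iteratively selecting a covered $D_{i_j}$ that maximizes $|D_{i_j}\setminus U_{j-1}|$; either every step adds at least two new elements (giving $|U_t|\geq 2k/3$), or the greedy stalls, which forces all remaining covered $D_i$'s into a ``core'' of size less than $2k/3$. These degenerate cases reduce to a structurally simpler subproblem---e.g., if all covered $D_i$'s are singletons, the best $K$ consists of the top-$k$ positions by $c_{\{a\}}$ and is found in $\cO(\ell)$ time; more generally, the core can be enumerated directly within the $\cO(N^{k/3})$ budget by using a smaller-$t$ witness tuple together with a correspondingly larger extension. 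Taking the better of the two algorithms' running times, plus the $\cO(N)$ preprocessing, yields the claimed $\cO(N+\min\{N^{k/3},\ell^k\})$ bound.
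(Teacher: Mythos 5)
Your overall architecture --- reduce to diff sets with multiplicities, then take the better of a brute-force $\cO(\ell^k)$ enumeration and a witness-based enumeration --- is essentially the paper's, which phrases the diff sets as weighted hyperedges and solves \HS{k} by a branching algorithm. But there are two concrete gaps in your second algorithm. First, the degenerate case is where the real work lies, and your sketch handles it incorrectly: when the greedy witness selection stalls at a core $U$ with $|U|<2k/3$, you propose ``a smaller-$t$ witness tuple together with a correspondingly larger extension.'' If that larger extension is brute force, the cost over cores that are unions of $a<\lceil k/3\rceil$ diff sets is $d^{a}\ell^{k-2a}$, which for intermediate $a$ exceeds $\min\{N^{k/3},\ell^{k}\}$ (e.g.\ $k=6$, $d=\ell$, $a=1$ gives $\ell^{5}$ against the target $N^{2}=\ell^{4}$). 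The point you must use --- exactly the paper's ``Case 2'' --- is that stalling means every covered diff set contributes at most one position outside $U$, so $f(U\cup E)=f(U)+\sum_{v\in E}w_U(v)$ is additive in the marginal weights $w_U(v)=\sum_{D\,:\,D\setminus U=\{v\}}c_D$, and the optimal completion is simply the $k-|U|$ positions of largest $w_U(v)$, computable in $\cO(d+\ell)$ per core with no enumeration of extensions. Your all-singletons example is the special case $U=\emptyset$; the general statement is what makes the budget close.

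Second, the arithmetic $\cO(d^{t}\cdot\ell^{k/3})=\cO((d\ell)^{k/3})$ with $t=\lceil k/3\rceil$ is false when $3\nmid k$ (for $k=4$ it asserts $d^{2}\ell^{4/3}=\cO(d^{4/3}\ell^{4/3})$). You need the sharper guarantee that a non-stalling greedy yields $|U|\ge 2t$, hence extension cost $\ell^{k-2t}$ and total $d^{t}\ell^{k-2t}=\ell^{k}(d/\ell^{2})^{t}$; this is at most $(d\ell)^{k/3}$ precisely when $d\le\ell^{2}$, which you may assume because otherwise $\ell^{k}\le N^{k/3}$ and your first algorithm already meets the claimed bound. (The paper makes the same move: it disposes of $|E|>|V|^{2}$ via the $\cO((2|V|)^{k}+|E|)$ algorithm and then uses $|E|\le|V|^{2}$ to convert its $|E|^{k/2}$ branching count into $(|V|\cdot|E|)^{k/3}$.) With these two repairs your argument becomes a non-adaptive rephrasing of the paper's proof; neither repair is optional.
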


Recall that a {\em hypergraph} $H$ is a pair $(V,E)$, where $V$ is the set of nodes of $H$ and $E$ is a set of non-empty subsets of $V$, called {\em hyperedges}---in order to simplify terminology we will simply call them edges. Hypergraphs are a generalization of graphs in the sense that an edge can connect more than two  nodes. Recall that the size of an edge is the number of nodes  it contains. The {\em rank} of $H$, denoted by $r(H)$, is the maximum size of an edge of $H$.

We refer to a hypergraph $H[K]=(K,\{e : e \in E, e \subseteq K\})$, where $K$ is a subset of $V$, as a $|K|$-\emph{section}. $H[K]$ is   
the hypergraph induced by $H$ on the nodes of $K$, and it contains all edges of $H$ whose elements are all in $K$. 
A hypergraph is {\em weighted} when each of its edges is associated with a weight. We define {\em the weight} of a weighted hypergraph as the sum of the weights of all of its edges. In what follows, we also refer to weights of nodes for conceptual clarity; this is equivalent to having a singleton edge of equal weight consisting of that node.

We define the following auxiliary problem on hypergraphs (see also~\cite{mku_max}).

\defproblem{\HS{k}}{A weighted hypergraph $H=(V,E)$, with $E$ given as a list, and an integer~$k>0$.}{A subset $K$ of size $k$ of $V$ such that $H[K]$ has maximum weight.}

When $k=\cO(1)$, we preprocess the edges of $H$ as follows in order to have $\cO(1)$-time access to any queried edge.
We represent each edge as a string, whose letters correspond to its elements in increasing order.
Then, we sort all such strings lexicographically using radix sort in $\cO(|E|)$ time and construct a trie over them.
An edge can then be accessed in $\cO(k \log k)=\cO(1)$ time by a forward search starting from the root node of the trie.

A polynomial-time $\cO(n^{0.697831+\epsilon})$-approximation for \HS{k}, for any $\epsilon>0$, for the case when all hyperedges of $H$ have size at most 3 was shown in~\cite{mku_max} (see also~\cite{DBLP:journals/siamcomp/Applebaum13}).

Two remarks are in place. 
First, we can focus on edges of size up to $k$ as larger edges cannot, by definition, exist in any $k$-section. Second, \HS{k} is a generalization of the problem of deciding whether a $(c,k)$-hyperclique (i.e.~a set of $k$ nodes whose subsets of size $c$ are all in $E$) exists in a graph, which in turn is a generalization of \kCl. Unlike \kCl, the $(c,k)$-hyperclique problem is not known to benefit from fast matrix multiplication in general; see~\cite{LWW-soda18} for a discussion on its hardness. 

\begin{lemma}\label{lem:red_hyper}
\PPSMw can be reduced to \HS{k} for a hypergraph with $\ell$  nodes and $d$ edges in $\cO(N)$ time, where $N=d\ell$.  
\end{lemma}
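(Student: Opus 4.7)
The plan is to exploit the key observation that, for any $K\subseteq\{1,\ldots,\ell\}$ and any string $s\in\D$, the masked query $q_K=q\otimes K$ matches $s$ if and only if the mismatch set $E_s=\{i : q[i]\neq s[i]\}$ is entirely contained in $K$. So the whole problem reduces to choosing $K$ of size $k$ that ``captures'' as many of the sets $E_s$ (with multiplicity) as possible, which is precisely the combinatorial content of \HS{k}.

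Concretely, I would first scan each string $s\in\D$ against $q$, computing $E_s$ in $\cO(\ell)$ time per string and $\cO(N)$ time overall. I would then build the weighted hypergraph $H=(V,E)$ on node set $V=\{1,\ldots,\ell\}$ by, for every $s$ with $1\le |E_s|\le k$, inserting $E_s$ as an edge of weight $1$ (allowing parallel copies when two strings produce identical mismatch sets, or equivalently bucketing and recording the multiplicity as the weight). Strings with $|E_s|>k$ are discarded because such a set can never be contained in a $K$ of size $k$, and strings with $E_s=\emptyset$ are accumulated into a constant $c$ counting strings that match $q_K$ for every choice of $K$.

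For any candidate $K$ of size $k$, the number of strings of $\D$ matched by $q_K$ is then exactly $c+\sum_{e\in E,\, e\subseteq K} w(e)$, i.e., $c$ plus the weight of $H\times K$. Since $c$ does not depend on $K$, a set $K$ maximizing the matches of $q_K$ is exactly a set $K$ maximizing the weight of $H\times K$; conversely, an optimal \HS{k} solution on $H$ yields an optimal \PPSMw solution. By construction, $H$ has $\ell$ nodes and at most $d$ edges, and the whole reduction runs in $\cO(N)$ time.

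The main obstacle is really just bookkeeping: making sure the corner cases $E_s=\emptyset$ and $|E_s|>k$ are correctly excluded from $H$ but properly accounted for in the objective, and ensuring that the hypergraph formalism used by \HS{k} supports either weighted multi-edges or coalescing of identical mismatch sets into a single weighted edge. Neither affects the claimed bounds of $\ell$ nodes, $d$ edges, and $\cO(N)$ construction time.
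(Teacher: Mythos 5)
Your proposal is correct and follows essentially the same route as the paper's proof: compute the mismatch set of $q$ with each dictionary string in $\cO(N)$ total time, insert the sets of size at most $k$ as weighted edges (coalescing duplicates by incrementing weights), and observe that the weight of $H\times K$ counts exactly the strings matched by $q\otimes K$. Your explicit treatment of exactly-matching strings as an additive constant is just a slightly more careful phrasing of the paper's remark that such strings can be ignored.
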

\begin{proof}
We first compute the set $M_s$ of positions of mismatches of $q$ with each string $s \in \D$.
We ignore strings from $\D$ that match $q$ exactly, as they will match $q$ after changing any set of letters of $q$ to wildcards. This requires $\cO(d\ell)=\cO(N)$ time in total. 

Let us consider an empty hypergraph (i.e.~with no edges) $H$ on $\ell$ nodes, numbered $1$ through $\ell$. 
Then, for each string $s \in \D$, we add $M_s$ to the edge-set of $H$ if $|M_s| \leq k$; if this edge already exists, we simply increment its weight by $1$.

We set the parameter $k$ of \HS{k} to the parameter $k$ of \PPSMw. We now observe that for $K \subseteq V$ with $|K|=k$, the weight of $H[K]$ is equal to the number of strings that would match $q$ after replacing with wildcards the $k$ letters of $q$ at the positions corresponding to elements of $K$. The statement follows.
\end{proof}

An example of the reduction in \cref{lem:red_hyper} is shown in Figure~\ref{fig:redu-hkhyper}.

\begin{figure}[!t]
\begin{center}
\includegraphics[trim={0 0 0 0},clip,width=0.99\textwidth]{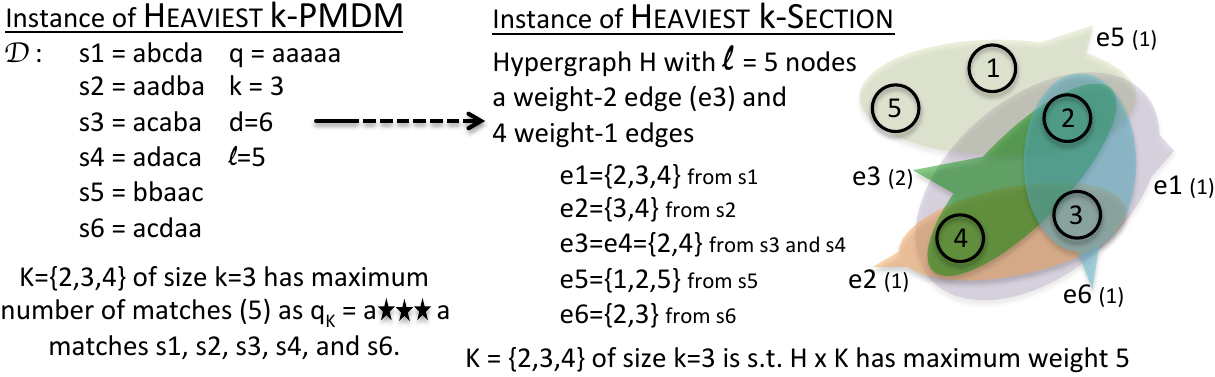}
\end{center}
\caption{An example of the reduction from \PPSMw to \HS{k}. The solutions are at the bottom. Each edge has its weight in brackets and the total weight is $d=6$.\label{fig:redu-hkhyper}}
\end{figure}

The next lemma gives a straightforward solution to \HS{k}. It is analogous to algorithm \textsc{Small}-$\ell$, presented in Section~\ref{sec:DS}, but without the optimization in computing sums of weights over subsets. It implies a linear-time algorithm for \HS{1}. 

\begin{lemma}\label{lem:Vk}
\HS{k}, for any constant $k$, can be solved in $\cO(|V|^k+|E|)$ time and $\cO(|V|+|E|)$ space.\label{lemma:heaviestksection}
\end{lemma}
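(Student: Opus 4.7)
The plan is to give a direct enumeration algorithm that loops over all size-$k$ subsets $K \subseteq V$, computing the weight of $H \times K$ by looking up each potential edge of $H \times K$ in a hash map. The key observation is that since $r(H)$ may be assumed at most $k$ (edges of size exceeding $k$ are incident to no $k$-section and are discarded in a linear-time preprocessing scan), every edge of $H \times K$ is some subset of $K$, and there are only $2^k$ such subsets to probe.

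Concretely, I would first build, in $\cO(|E|)$ time, a hash map sending a canonical representation of each edge $e \in E$ (say, the sorted tuple of its nodes) to its weight; non-edges are treated implicitly as weight $0$. This uses $\cO(|E|)$ space. Then I would enumerate the size-$k$ subsets of $V$ using a folklore combination-generation routine that emits them one at a time in $\cO(k)=\cO(1)$ working space per subset, yielding $\binom{|V|}{k}$ iterations. For each such $K$, I would enumerate its $2^k$ subsets $S \subseteq K$ by iterating over bitmasks on the positions of $K$, looking each $S$ up in the hash map and accumulating the weights into a running total $w(K)$. The best $K$ seen so far is maintained along the way, so no list of candidates is stored.

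Correctness reduces to the identity $w(H \times K) = \sum_{S \subseteq K} \mathrm{wt}(S)$, where $\mathrm{wt}(S)$ is the weight of $S$ in $E$ (or $0$ if $S \notin E$); this is immediate from the definition of the $k$-section, using that every $e \in E$ has $|e| \le k$ so that $e \subseteq K$ iff $e$ appears as one of the $2^k$ subsets of $K$ probed. For the running time, the preprocessing costs $\cO(|E|)$, and the enumeration costs
\[
\binom{|V|}{k} \cdot 2^k \;\le\; |V|^k \cdot 2^k \;=\; (2|V|)^k
\]
assuming $\cO(1)$-time hash lookups on constant-sized keys (valid since $k=\cO(1)$). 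The space is dominated by the hash map and the bookkeeping for the current and best $K$, giving $\cO(|V|+|E|)$ overall.

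Nothing in the argument is really an obstacle; the only subtle points to spell out cleanly are (i) justifying the reduction to $r(H) \le k$ at the start so that all relevant edges are among the $2^k$ subsets of $K$, and (ii) specifying that combinations and subsets can be enumerated in the claimed bounds with $\cO(1)$-space per step, for which citing the folklore references already used in Section~\ref{sec:DS} suffices.
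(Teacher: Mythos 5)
Your proposal is correct and matches the paper's proof essentially verbatim: both build a (perfect) hash map from edges to weights in $\cO(|E|)$ time and then, for each of the $\cO(|V|^k)$ candidate subsets $K$, sum the weights of its $\cO(2^k)$ subsets via hash-map lookups, giving $\cO((2|V|)^k+|E|)$ time and linear space. The two details you flag (discarding edges of size exceeding $k$, and constant-space combination enumeration) are likewise already present in or adjacent to the paper's argument.
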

\begin{proof}
For every subset $K\subseteq V$ of size at most $k$, we sum the weights of all edges corresponding to its subsets. There are $\genfrac(){0pt}{1}{|V|}{k}=\cO(|V|^k)$ choices for $|K|$, each having $2^k-1$  non-empty subsets: for every subset, we can access the corresponding edge (if it exists) in $\cO(1)$ time. 
\end{proof}

We next show that for the cases $k=2$ and $k=3$, there exist more efficient solutions. In particular,  we provide a linear-time algorithm for \HS{2}.

\begin{lemma}\label{lem:2}
\HS{2} can be solved in $\cO(|V|+|E|)$ time.
\end{lemma}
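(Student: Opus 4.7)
The plan is to exploit the fact that when $|K|=2$, the only edges of $H$ that can possibly lie in $H \times K$ are those of size at most $2$, i.e.\ singletons and pairs. Hence, writing $K = \{u,v\}$ and setting $w_1(x) = w(\{x\})$ (or $0$ if $\{x\}\notin E$) and $w_2(x,y) = w(\{x,y\})$ (or $0$ if $\{x,y\}\notin E$), the weight to be maximized equals
\[
  \mathrm{wt}(u,v) \;=\; w_1(u) + w_1(v) + w_2(u,v).
\]
Edges of size at least $3$ can simply be discarded in a single pass over $E$, so after $\cO(|V|+|E|)$ preprocessing we keep only singletons and pairs and may assume $r(H)\le 2$.

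First I would scan $E$ once to collect, for each node $v$, its singleton weight $w_1(v)$, storing it in an array indexed by $V$; this costs $\cO(|V|+|E|)$. Then I would split the search into two candidates. \emph{Candidate A:} the best pair that is \emph{not} a size-$2$ edge of $H$. For such a pair, $w_2(u,v)=0$, so we simply need the two nodes with the largest $w_1$-values, which we can find in $\cO(|V|)$ time by a linear scan that maintains the top two. \emph{Candidate B:} the best pair that \emph{is} a size-$2$ edge; we iterate over all pair-edges $e=\{u,v\}\in E$ and evaluate $w_1(u)+w_1(v)+w_2(e)$, taking the argmax, which costs $\cO(|E|)$. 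The answer is the better of the two candidates.

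For correctness, any optimal $K^* = \{u^*,v^*\}$ either has $\{u^*,v^*\}\in E$, in which case it is examined in Candidate B, or it does not, in which case $\mathrm{wt}(u^*,v^*)=w_1(u^*)+w_1(v^*)$ is upper-bounded by the sum of the two maximum singleton weights returned by Candidate A. Note that the top-two $w_1$-scoring nodes might themselves form an edge, but that only means Candidate A could be an underestimate of a valid Candidate-B pair, never of a non-edge pair, so the maximum of the two candidates is optimal. The total running time is $\cO(|V|+|E|)$. The only subtle point I would watch out for is the degenerate case in which $|V|<2$ or only one node has nonzero singleton weight, but these are handled correctly by initializing the top-two with $0$-valued dummy entries; there is no real obstacle here, just careful bookkeeping.
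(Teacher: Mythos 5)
Your proposal is correct and follows essentially the same two-case decomposition as the paper's proof: your Candidate B (iterate over size-$2$ edges, adding incident singleton weights) is the paper's Case 1, and your Candidate A (take the two nodes of maximum singleton weight) is the paper's Case 2, with the same observation that taking the maximum of the two candidates handles the overlap correctly.
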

\begin{proof}
Let $K$ be a set of nodes of size $2$ such that $H[K]$ has maximum weight. We decompose the problem in two cases. 
For each of the cases, we give an algorithm that considers several $2$-sections such that the heaviest of them has weight equal to that of $H[K]$. 

\emph{Case 1.}~There is an edge $e=K$ in $E$. For each edge $e \in E$ of size $2$, i.e.~edge in the classic sense, we compute the sum of its weight and the weights of the nodes that it is incident to. 
This step requires $\cO(|E|)$ time.

\emph{Case 2.}~There is no edge equal to $K$ in $E$. We compute $H[\{v_1,v_2\}]$, where $v_1,v_2$ are the two nodes with maximum weight, i.e.~max and second-max. This step takes $\cO(|V|)$ time.

In the end, we return the heaviest $2$-section among those returned by the algorithms for the two cases, breaking ties arbitrarily.
\end{proof}

We next show that for $k=3$ the result of Lemma~\ref{lem:Vk} can be improved when $|E| = o(|V|^2)$.

\begin{lemma}\label{lem:3}
\HS{3} can be solved in time $\cO(|V|\cdot|E|)$ using $\cO(|V|+|E|)$ space.
\end{lemma}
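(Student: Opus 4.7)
The plan is to build in $\cO(|V|+|E|)$ time a perfect hash map $W$ storing the weight of every edge in $E$ (as in Lemma~\ref{lem:Vk}), so that for any $S\subseteq V$ with $|S|\leq 3$ the value $W(S)$---where we use the convention $W(S)=0$ if $S\notin E$---can be looked up in $\cO(1)$. Since edges of size exceeding $3$ may be discarded (they cannot appear in any $3$-section), this lets us evaluate the weight of $H\times K$ for any candidate triple $K$ by a constant number of lookups.

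I would then split on which edge sizes appear in $H\times K^\star$ for an optimum $K^\star=\{a,b,c\}$, and treat three (possibly overlapping) cases by separate subroutines. First, if $H\times K^\star$ has no edge of size $\geq 2$, the section weight equals $W(\{a\})+W(\{b\})+W(\{c\})$, so the three nodes of largest singleton weight form an optimum; this candidate is found in $\cO(|V|)$ time. Second, if $H\times K^\star$ contains a size-$3$ edge, then $K^\star$ must equal that edge, so iterating over every size-$3$ edge $e\in E$ and evaluating the weight of $H\times e$ costs $\cO(|E|)$ overall. Third, if $H\times K^\star$ contains a size-$2$ edge $\{a,b\}$, I would pivot on every size-$2$ edge $\{u,v\}\in E$ paired with every third node $w\in V\setminus\{u,v\}$, evaluating $H\times\{u,v,w\}$ in $\cO(1)$ each, for a total of $\cO(|V|\cdot|E|)$. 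The algorithm returns the heaviest candidate across the three subroutines.

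The total running time is dominated by the third subroutine, giving $\cO(|V|\cdot|E|)$, while the working space is $\cO(|V|+|E|)$ from the hash map and an array of singleton weights. The main obstacle is verifying that this case split is exhaustive: every optimum $K^\star$ either contains no non-singleton edge inside $H\times K^\star$ (handled by the first subroutine) or contains at least one non-singleton edge, which serves as a pivot in the second or third subroutine. A minor subtlety is that the first subroutine may return a triple whose induced section happens to include non-singleton edges; this is harmless, as any extra weight would also be recovered by one of the other two subroutines, and we take the overall maximum.
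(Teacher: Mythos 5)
Your proposal is correct and follows essentially the same approach as the paper's proof: the identical three-way case split on whether the optimal $3$-section contains a size-$3$ edge, a size-$2$ edge, or only singletons, with the same $\cO(|E|)$, $\cO(|V|\cdot|E|)$, and $\cO(|V|)$ subroutines respectively, backed by a perfect hash map for constant-time edge-weight lookups. The extra care you take in justifying exhaustiveness and the harmlessness of overlapping cases is a welcome addition but does not change the argument.
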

\begin{proof}
Let $K$ be a set of nodes of size $3$ such that $H[K]$ has maximum weight. We decompose the problem into the following three cases.  

\emph{Case 1.}~There is an edge $e=K$ in $E$. We go through each edge $e \in E$ of size $3$ and compute the weight of $H[e]$ in $\cO(1)$ time. This takes $\cO(|E|)$ time in total. Let the edge yielding the maximum weight be $e_{\max}$.

\emph{Case 2.}~There is no edge of size larger than one in $H[K]$. We compute $H[\{v_1,v_2,v_3\}]$, where $v_1,v_2,v_3$ are the three nodes with maximum weight, i.e.~max, second-max and third-max. This step takes $\cO(|V|)$ time.

\emph{Case 3.}~There is an edge of size $2$ in $H[K]$. 
We can pick an edge $e$ of size $2$ from $E$ in $\cO(|E|)$ ways and a node $v$ from $V$ in $\cO(|V|)$ ways. We compute the weight of $H[(e\cup \{v\})]$ for all such pairs. Let the pair yielding maximum weight be $(e',u')$.

Finally, the maximum weight of $H[K']$ for $K' \in \{\,e_{\max},\,\{v_1,v_2,v_3\},\,e'\cup \{u'\}\,\}$ is equal to the weight of $H[K]$, breaking ties arbitrarily.
\end{proof}

We next address the remaining case of any arbitrarily large constant $k\ge 4$.

\begin{lemma}\label{lem:k}
\HS{k} for an arbitrarily large constant $k \ge 4$ can be solved in time $\cO((|V|\cdot|E|)^{k/3})$ using $\cO(|V|+|E|)$ space.
\end{lemma}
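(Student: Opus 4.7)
The plan is to extend Lemma~\ref{lem:3} by a constant-depth recursion: at each level we fix a size-$3$ subset $A$ of the (unknown) optimum $K^*$ from a set of only $\cO(|V||E|)$ candidates, and then recurse on a residual instance of \HS{k-3}.

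First I would build, for each candidate $A$ enumerated exactly as in Lemma~\ref{lem:3}'s proof (a size-$3$ edge of $E$; the three globally heaviest nodes; or the union of a size-$2$ edge of $E$ with any node of $V$), a reduced hypergraph $H'$ on $V\setminus A$: for every $e\in E$ with $e\not\subseteq A$ we add to $H'$ the edge $e\setminus A$ of weight $w(e)$, combining weights when several $e$ project to the same set. A direct calculation gives $H'\times K'=H\times (A\cup K')-H\times A$ for every $K'\subseteq V\setminus A$, so recursively finding the heaviest $(k-3)$-section of $H'$ yields the best extension of $A$ to a $k$-section.

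Correctness reduces to showing that some enumerated $A$ lies inside an optimal $K^*$. If $K^*$ contains a size-$3$ edge of $E$, take it. Otherwise, if $K^*$ contains a size-$2$ edge $e\in E$, take $A=e\cup\{v\}$ for any $v\in K^*\setminus e$; this $A$ is among Lemma~\ref{lem:3}'s Case-$3$ candidates because their extra node ranges over all of $V$. Otherwise, $H\times K^*$ is the sum of the node weights in $K^*$, and a standard swap argument (replacing a light node of $K^*$ by a strictly heavier outside node does not decrease the weight, because edge weights are non-negative) yields an equally optimal $K^*$ containing the three heaviest nodes, i.e.\ exactly the Case-$2$ candidate.

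The recurrence $T(k)=\cO(|V||E|)\cdot\bigl(\cO(|V|+|E|)+T(k-3)\bigr)$ with base case $T(3)=\cO(|V||E|)$ from Lemma~\ref{lem:3} solves to $\cO((|V||E|)^{k/3})$ for $k$ divisible by $3$; for $k\equiv 1,2\pmod 3$ the recursion bottoms out at $k\in\{1,2\}$, handled by a linear-time scan and Lemma~\ref{lem:2} respectively, which fits within the stated bound. The recursion depth is constant, so the space stays $\cO(|V|+|E|)$. The main obstacle I foresee is justifying the swap argument at every level of the recursion, since the weights of $H'$ at deeper levels are augmented by singleton derived edges and new size-$\ge 2$ derived edges may appear; I would address this by reapplying the same three-case analysis to the current hypergraph rather than the original one, so the argument depends only on the present edge structure and node weights.
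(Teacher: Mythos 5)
Your recursive scheme is a genuinely different route from the paper's proof (which branches on edges containing at least two not-yet-chosen nodes and finishes each branch with a single greedy node-selection step), but as written it has two concrete gaps.

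First, the correctness argument is incomplete. Your case analysis of why some candidate triple $A$ lies inside an optimal $K^*$ only rules out edges of size $2$ and $3$ before concluding that ``$H\times K^*$ is the sum of the node weights in $K^*$''. But for $k\ge 4$ the hypergraph may contain edges of size $4,\ldots,k$ (the reduction of Lemma~\ref{lem:red_hyper} produces edges of every size up to $k$), so $H\times K^*$ can consist of one heavy size-$4$ edge and nothing else. Concretely, take $E=\{\{1,2,3,4\}\}$ with weight $10$ together with singletons $\{5\},\{6\},\{7\}$ of weight $1$, and $k=4$: none of your candidates (there is no size-$3$ edge, no size-$2$ edge, and the three heaviest nodes are $\{5,6,7\}$) is contained in the optimum $\{1,2,3,4\}$, and the swap argument fails because deleting one node of a size-$4$ edge loses the whole edge. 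You would need, for instance, to also enumerate all $3$-subsets of every edge of size at least $3$; this is a missing idea rather than a routine detail, and the same hole recurs at every level of your recursion, since projected edges $e\setminus A$ can still have size $\ge 4$.

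Second, the claimed running time does not follow from your recurrence when $k\equiv 1\pmod 3$. For $k=4$ you pay $\cO(|V||E|)$ candidates times $\cO(|V|+|E|)$ for the residual \HS{1}, i.e.\ $\cO(|V|^2|E|+|V||E|^2)$, which exceeds $\cO((|V||E|)^{4/3})$ at essentially all densities (for $|V|=|E|=n$ this is $n^3$ versus $n^{8/3}$); the loss propagates to $k=7,10,\ldots$ through the recursion. The paper avoids this by making each branching step cost only $\cO(|E|)$ while consuming at least \emph{two} nodes of the budget, and by paying the $\cO(|V|)$ node-selection cost only once per leaf of the branching tree; it also relies on the normalizations $|E|\le|V|^2$ (dispatching the dense case to Lemma~\ref{lem:Vk}) and $|V|\le k|E|$, which your write-up would need to state as well even for the $k\equiv 2\pmod 3$ case.
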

\begin{proof}
If $|E| > |V|^2$, then the simple algorithm of Lemma~\ref{lem:Vk} solves the problem in time \[\cO(|V|^k+|E|) = \cO(|V|^{k/3} (|V|^2)^{k/3}+|E|) = \cO((|V|\cdot|E|)^{k/3})\]
and linear space. We can thus henceforth assume that $|E| \leq |V|^2$.

Let $K$ be a set of nodes of size at most $k$ such that $H[K]$ has maximum weight.
If $H[K]$ contains isolated nodes (i.e.~nodes not contained in any edge),
they can be safely deleted without altering the result. We can thus assume that $H[K]$ does not contain isolated nodes, and that $|V| \le k|E|$ since otherwise the hypergraph $H$ would contain isolated nodes. 

We first consider the case that the rank $r(H[K])>1$, i.e.~there is an edge of $H[K]$ of size at least 2. We design a branching algorithm that constructs several candidate sets; the ones with maximum weight will have weight equal to that of $H[K]$. We will construct a set of nodes $X$, starting with $X:=\emptyset$.
For each set $X$ that we process, let $Z_X$ be the superset of $X$ of size at most $k$ such that $H[Z_X]$ has maximum weight.
We have the following two cases:

\emph{Case 1.}~There is an edge $e$ in $H[Z_X]$ that contains at least two nodes from $Z_X \setminus X$. To account for this case, we select every possible such edge $e$, set $X := X \cup e$, and continue the branching algorithm.

\emph{Case 2.}~Each edge in $H[Z_X]$ contains at most one node from $Z_X \setminus X$. In this case we conclude the branching algorithm as follows. For every node $v \in V \setminus X$ we compute its weight as the total weight of edges $Y \cup \{v\} \in E$ for $Y \subseteq X$ in $\cO(2^k) = \cO(1)$ time. Finally, in $\cO(|V|k) = \cO(|V|)$ time we select $k-|X|$ nodes with largest weights and insert them into $X$. The total time complexity of this step is $\cO(|V|)$. This case also works if $|X|=k$ and then its time complexity is only $\cO(1)$.

The correctness of this branching algorithm follows from an easy induction, showing that at every level of the branching tree there is a subset of $K$.

Let us now analyze the time complexity of this branching algorithm. Each branching in Case 1 takes $\cO(|E|)$ time and increases the size of $|X|$ by at least 2. At every node of the branching tree we call the procedure of Case 2. It takes $\cO(|V|)$ time if $|X| < k$.

If the procedure of Case 2 is called in a non-leaf node of the branching tree, then its $\cO(|V|)$ running time is dominated by the $\cO(|E|)$ time that is required for further branching since we have assumed that $|V| \le k|E|$. Hence, it suffices to bound (a) the total time complexity of calls to the algorithm for Case 2 in leaves that correspond to sets $X$ such that $|X| < k$ and (b) the total number of leaves that correspond to sets $X$ such that $|X| = k$.

If $k$ is even, (a) is bounded by $\cO(|E|^{(k-2)/2} |V|)$ and (b) is bounded by $\cO(|E|^{k/2})$. Hence, (b) dominates (a) and we have
\begin{equation}\label{eq:23}
\cO(|E|^{k/2}) = \cO(|E|^{k/3} |E|^{k/6}) = \cO(|E|^{k/3} |V|^{k/3}).
\end{equation}

If $k$ is odd, (a) is bounded by $\cO(|E|^{(k-1)/2} |V|)$ and (b) is bounded by $\cO(|E|^{(k-1)/2})$, which is dominated by (a). By using \eqref{eq:23} for $k-3$ we also have:

\begin{multline*}
\cO(|E|^{(k-1)/2} \cdot |V|) = \cO(|E|^{(k-3)/2} \cdot |E| \cdot |V|) =\\ \cO((|E|\cdot |V|)^{(k-3)/3} \cdot |E| \cdot |V|) = \cO((|E|\cdot |V|)^{k/3}).
\end{multline*}

We now consider the case that $r(H[K])=1$. We use the algorithm for Case 2 above that works in $\cO(|V|)$ time, which is $\cO(|V| \cdot |E|)$.
\end{proof}

Lemmas~\ref{lem:red_hyper}-\ref{lem:k} imply Theorem~\ref{the:kWMDM}, which we iteratively employ  to obtain the following result.

\begin{theorem}
\PPSM can be solved in time $\cO(N+\min\{N^{k/3},\ell^k\})$ using space $\cO(N)$ if $k=\cO(1)$, where $N=d\ell$.\label{the:PPSMZ}
\end{theorem}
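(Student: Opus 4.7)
My plan is to apply Theorem~\ref{the:kWMDM} iteratively. The only gap between that theorem and the statement to be proved is that \PPSMw requires the size $k$ as input, whereas \PPSM asks us to \emph{find} the smallest such $k$. I would close the gap by the obvious loop: for $i = 1, 2, 3, \ldots$, invoke the algorithm of Theorem~\ref{the:kWMDM} to obtain a set $K_i \subseteq \{1,\ldots,\ell\}$ of size $i$ such that $q \otimes K_i$ matches the maximum possible number of strings in $\D$, and stop at the first $i$ for which this count is at least $z$; this $K_i$ is returned as the answer.

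Correctness is immediate from the fact that \PPSMw returns an \emph{optimum} (not merely feasible) mask: if any size-$i$ subset of positions yields at least $z$ matches, then $K_i$ does too. Hence the first $i$ triggering the stopping condition equals exactly the optimum $|K|$. The match count itself can be read off in $\cO(N)$ time once $K_i$ is known, so this check adds nothing asymptotically to one iteration.

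The only substantive part is the running-time analysis. Summing the per-call cost gives
\[
\sum_{i=1}^{k}\cO\!\left(N + \min\{N^{i/3},\,\ell^i\}\right) \;=\; \cO\!\left(kN + \sum_{i=1}^{k}\min\{N^{i/3},\,\ell^i\}\right).
\]
Both $N^{i/3}$ and $\ell^i$ are non-decreasing in $i$, so the inner sum is dominated by its last term up to a factor depending on $k$; since $k = \cO(1)$, the whole expression collapses to $\cO(N + \min\{N^{k/3},\,\ell^k\})$, as claimed. The space bound $\cO(N)$ follows by reusing the workspace of the \PPSMw procedure across iterations.

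I do not expect a serious obstacle: the only mild subtlety is ensuring that the stopping criterion is based on the \emph{optimal} match count at each size $i$, so that we do not prematurely halt at a suboptimal mask and overshoot the true minimum $|K|$; this is guaranteed precisely because Theorem~\ref{the:kWMDM} delivers an exact maximizer, not just a feasible set.
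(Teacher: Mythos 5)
Your proposal is correct and matches the paper's proof: the paper likewise reduces to the hypergraph problem once and then solves \HS{k} for $k=1,2,\ldots$ (via Lemmas~\ref{lem:Vk}--\ref{lem:k}, which together give Theorem~\ref{the:kWMDM}) until the maximum achievable weight reaches $z$, with the same geometric-sum/constant-$k$ argument absorbing the cost of the earlier iterations. The only cosmetic difference is that you re-invoke the full Theorem~\ref{the:kWMDM} (including the $\cO(N)$ reduction) at every size $i$, which is harmless since $k=\cO(1)$.
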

\begin{proof}
We apply Lemma~\ref{lem:red_hyper} to obtain a hypergraph with $|V|=\ell$ and $|E|=d$.
Starting with $k=1$ and for growing values of $k$, we solve \HS{k} until we obtain a solution of weight at least $z$, employing either only Lemma~\ref{lem:Vk}, or Lemmas~\ref{lem:Vk},~\ref{lem:2},~\ref{lem:3},~\ref{lem:k} for $k=1,2,3$ and $k\geq 4$, respectively. We obtain $\cO(N+\min\{N^{k/3},\ell^k\})$ time and $\cO(N)$ space. 
\end{proof}

\section{Exact Algorithms for a Bounded Number $m$ of Query Strings}\label{sec:many_strings}

Recall that masking a potential match $(q_1,q_2)$ in record linkage can be performed by solving \PPSM twice and superimposing the wildcards (see Section~\ref{sec:intro}). In this section, we consider the following generalized version of \PPSM to perform the masking simultaneously. The advantage of this approach is that it minimizes the final number of wildcards in $q_1$ and $q_2$.

\defproblem{Multiple Pattern Masking for Dictionary Matching (\MPPSM)}{A dictionary $\D$ of $d$ strings, each of length $\ell$, a collection $\M$ of $m$ strings, each of length $\ell$, and a positive integer $z$.}{A smallest set $K\subseteq\{1,\ldots,\ell\}$ such that, for every $q$ from $\M$, $q_K\!=\!q\!\otimes\! K$ matches at least $z$ strings from $\D$.}

Let $N=d\ell$. We show the following theorem.

\begin{theorem}\label{thm:MPPSM}
\MPPSM can be solved in time $\cO(N+\min\{N^{k/3}z^{m-1},\ell^k\})$ if $k=\cO(1)$ and $m=\cO(1)$, where $N=d\ell$.\label{the:MPPSMZ}
\end{theorem}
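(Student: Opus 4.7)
I would generalise the hypergraph-based approach of \cref{the:PPSMZ} from one query to $m$ queries. First, apply \cref{lem:red_hyper} separately to each $q_j \in \M$, obtaining $m$ hypergraphs $H_1, \ldots, H_m$ on the common node set $\{1, \ldots, \ell\}$, each with at most $d$ edges of size at most $k$. Since $m=\cO(1)$, this preprocessing costs $\cO(mN)=\cO(N)$ time. The task then becomes: find a smallest $K\subseteq\{1,\ldots,\ell\}$ such that, simultaneously for every $j$, the $|K|$-section $H_j\times K$ has weight at least $z$. As in \cref{the:PPSMZ}, I would iterate $k=1,2,\ldots,\ell$ and, for each $k$, decide whether such a $K$ of size $k$ exists, returning it for the smallest feasible $k$.

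For the decision subroutine I would use the faster of two strategies, mirroring the structure of \cref{the:PPSMZ}. The $\cO(\ell^k)$ strategy simply enumerates all $\binom{\ell}{k}=\cO(\ell^k)$ subsets of size $k$. Using the folklore subset-sum DP already invoked in \cref{sec:DS} (algorithm \textsc{Small-$\ell$}) applied independently to each $H_j$, one precomputes the weight of $H_j\times K$ for all subsets $K$ of size at most $k$; with $m$ constant, this takes $\cO(\ell^k)$ time and the per-candidate check is $\cO(1)$.

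The interesting strategy is the $\cO(N^{k/3}z^{m-1})$ one, which extends the branching algorithm of \cref{lem:k} to act on all $m$ hypergraphs simultaneously. Use $q_1$ to drive the main branching exactly as in \cref{lem:k}, producing a tree of size $\cO(N^{k/3})$ whose nodes maintain a partial commitment $X\subseteq K$; at each leaf, the remaining $k-|X|$ positions would ordinarily be filled by greedy node selection. For $\MPPSM$, I would replace the greedy step by a secondary branching driven by the remaining queries: for each $j\geq 2$ in turn, if weight$(H_j\times X)<z$ then branch on an edge of $H_j$ that must be included in $K$ to close the gap (and add its unselected nodes to $X$); otherwise, $q_j$ is already satisfied and no branching is needed for it. Once no further branching is triggered for any $q_j$, the remaining nodes contribute to each $H_j\times K$ only as singletons or through edges already committed to $X$, so they can be fixed by a final greedy pass as in Case~2 of \cref{lem:k}.

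\textbf{Main obstacle.} The crux is proving that each additional query contributes only a multiplicative $\cO(z)$, rather than $\cO(d)$, to the running time. The intuition is that once weight$(H_j\times X)$ reaches $z$, further branching on $q_j$ stops, so the branching on $q_j$ essentially ``accumulates'' only $z$ units of weight; combined with an amortisation argument analogous to the even/odd-$k$ case analysis in \cref{lem:k} (bounding internal vs.\ leaf work of the branching tree), this yields the $z^{m-1}$ factor. The argument has to be made carefully to ensure that the branching choices for $q_1,\ldots,q_m$ compose multiplicatively in the claimed way and that feasibility of the greedy completion step is not compromised by having multiple queries to satisfy simultaneously.
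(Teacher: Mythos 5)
Your overall architecture (work with the $m$ mismatch hypergraphs jointly, reuse the branching of \cref{lem:k}, and treat the completion step specially) matches the paper's, which uses a single hypergraph whose edge weights are $m$-tuples added and compared component-wise. The genuine gap is in the completion step and, relatedly, in where the $z^{m-1}$ factor comes from. You attribute it to the secondary branching (``once the weight of $H_j\times X$ reaches $z$, branching on $q_j$ stops, so it accumulates only $z$ units''), but that bounds nothing: the branching factor at each step is the number of candidate edges, which can be up to $d$, and stopping once the weight reaches $z$ does not convert that into a factor of $z$. In the paper the branching is left entirely unchanged --- branching on any edge of the merged hypergraph with at least two uncommitted nodes still adds at least two nodes per step, so the tree-size analysis of \cref{lem:k} goes through verbatim with $|E|\le md=\cO(d)$ --- and the extra $z^{m-1}$ arises solely in Case~2.

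The step that actually fails in your write-up is the ``final greedy pass.'' In Case~2 each remaining node $v$ carries an $m$-tuple of weights (one coordinate per query), and you must choose $k-|X|$ nodes whose component-wise sum meets the residual demand in \emph{every} coordinate simultaneously. There is no total order on $m$-tuples, so ``select the heaviest'' is meaningless for $m>1$: a node can be heavy for $q_1$ and worthless for $q_2$, and a greedy pass can miss a feasible completion that exists. The paper resolves exactly this by reducing Case~2 to a separate selection problem ($\kappa$ Heaviest Vectors, a Multiple-Choice-Knapsack-style problem) solved by dynamic programming over residual demand vectors in $\{0,\ldots,z\}^{m-1}$ in $\cO(|V|\cdot z^{m-1})$ time, with weights capped at $z$ since any surplus is irrelevant; this DP is the sole source of the $z^{m-1}$ factor in \cref{thm:MPPSM}. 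To repair your proof you would need to replace the greedy completion by such a multi-dimensional selection routine and move the $z^{m-1}$ accounting there.
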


We use a generalization of \HS{k} in which the weights are $m$-tuples that are added and compared component-wise, and we aim to find a subset $K$ such that the weight of $H[K]$ is at least $(z,\ldots,z)$. An analogue of Lemma~\ref{lem:Vk} holds without any alterations, which accounts for the $\cO(N+\ell^k)$-time algorithm. We adapt the proof of~\cref{lem:k} as follows. The branching remains the same, but we have to tweak the final step, that is, what happens when we are in Case 2. For $m=1$ we could simply select a number of largest weights, but for $m>1$ multiple criteria need to be taken into consideration. All in all, the problem reduces to a variation of the classic Multiple-Choice Knapsack problem~\cite{Kellerer2004}, which we solve using dynamic programming. 

The variation of the classic Multiple-Choice Knapsack problem is as follows.

\defproblem{$\kappa$ Heaviest Vectors ($\kappa$-\HV)}{A collection $\mathcal{T}$ of $t$ vectors from $\mathbb{Z}_{\geq 0}^m$, a vector $x$ from $\{0,\ldots,z\}^m$, for a positive integer $z$, and an integer $\kappa \in \{0,\ldots,t\}$.}{Compute $\kappa$ elements of $\mathcal{T}$ (if they exist) such that if $y$ is their component-wise sum, $y[i]\geq x[i]$ for all $i \in \{1,\ldots,m\}$.}

The exact reduction from Case 2 is as follows: the set $\mathcal{T}$ contains weights of subsequent nodes $v \in V \setminus X$ (defined as the sums of weights of edges $Y \cup \{v\} \in E$ for $Y \subseteq X$), so $t \le |V|$, $x$ is $(z,\ldots,z)$ minus the sum of weights of all edges $e \in E$ such that $e \subseteq X$, and $\kappa=k-|X|$.

The solution to $\kappa$-\HV is a rather straightforward dynamic programming.

\begin{lemma}
For $\kappa,m=\cO(1)$, $\kappa$-\HV can be solved in time $\cO(t \cdot z^{m-1})$.
\end{lemma}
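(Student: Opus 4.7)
The plan is a textbook multi-dimensional knapsack-style dynamic program, with the twist that one of the $m$ coordinates is handled as the \emph{value} rather than as a state dimension, which is exactly what shaves the $z$ factor down to $z^{m-1}$.

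First I would preprocess by replacing every component $v[i]$ of every vector $v\in\mathcal{T}$ by $\min(v[i],z)$. Since every $x[i]\le z$, all feasibility tests against $x$ are preserved if we also cap running sums at $z$. Order the vectors $v_1,\dots,v_t$ arbitrarily and define
\[
D[i][j][s_1,\dots,s_{m-1}]
\]
for $i\in\{0,\dots,t\}$, $j\in\{0,\dots,\kappa\}$ and $s_1,\dots,s_{m-1}\in\{0,\dots,z\}$, as the maximum, capped at $z$, of the $m$-th coordinate of a component-wise sum of exactly $j$ vectors chosen from $v_1,\dots,v_i$, subject to the constraint that, for every $r\in\{1,\dots,m-1\}$, the sum of the $r$-th coordinates (capped at $z$) equals $s_r$. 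If no such sub-collection exists, $D[\cdot]=-\infty$.

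Next I would write the transitions. Either $v_{i+1}$ is skipped, giving $D[i+1][j][\vec s\,]\ge D[i][j][\vec s\,]$; or $v_{i+1}$ is taken, giving
\[
D[i+1][j][\vec s\,]\ \ge\ \min\bigl(z,\; D[i][j-1][\vec s\,^{-}]+v_{i+1}[m]\bigr),
\]
where $\vec s\,^{-}$ is the unique predecessor obtained by subtracting $v_{i+1}[r]$ from $s_r$ when $s_r<z$, and, when $s_r=z$, by trying the $\cO(1)$-many (in fact, at most two) options $s_r^-\in\{z-v_{i+1}[r],\,z\}$ that both cap to $z$; since $m=\cO(1)$ there are $\cO(1)$ predecessor vectors to examine, so each transition is $\cO(1)$. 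The table has $\cO(t\cdot\kappa\cdot(z+1)^{m-1})=\cO(t\cdot z^{m-1})$ cells, yielding the claimed total running time because $\kappa,m=\cO(1)$. The instance is a yes-instance iff some entry $D[t][\kappa][s_1,\dots,s_{m-1}]$ with $s_r\ge x[r]$ for all $r\le m-1$ satisfies $D[t][\kappa][\vec s\,]\ge x[m]$; the actual $\kappa$ vectors are recovered by standard back-pointers in $\cO(t)$ extra time.

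The only subtle point, and the one I would be most careful with, is the capping in the transition when $s_r=z$: without care this could look like we need to remember the true (uncapped) sum, blowing up the state space. The observation that any two predecessor values $s_r^-\ge z-v_{i+1}[r]$ are equivalent after applying $\min(\cdot,z)$ keeps each transition $\cO(1)$ and leaves only $m-1$ state-dimensions of size $z+1$, which is exactly what makes the $z^{m-1}$ factor (instead of $z^m$) appear.
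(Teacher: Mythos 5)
Your proposal is essentially the paper's own proof: the same dynamic program indexed by prefix $i$, number of chosen vectors $j$, and the first $m-1$ coordinate sums restricted to $\{0,\ldots,z\}^{m-1}$, storing the maximum attainable $m$-th coordinate and recovering the set via back-pointers. Your explicit capping of coordinates at $z$ (and the care taken with predecessors when a coordinate saturates) is a small technical refinement that the paper leaves implicit, but it does not change the approach.
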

\begin{proof}
We apply dynamic programming. Let $\mathcal{T}=v_1,\ldots,v_t$. We compute an array $A$ of size $\cO(t\kappa z^{m-1})$ such that, for $i \in \{0,\ldots,t\}$, $j \in \{0,\ldots,\kappa\}$ and $v \in \{0,\ldots,z\}^{m-1}$,
\[A[i,j,v]=\max\{a\,:\,\exists S \subseteq \{v_1,\ldots,v_i\},\,|S|=j,\,\sum_{u \in S} u=(v,a)\},\]
where $(v,a)$ denotes the operation of appending element $a$ to vector $v$.
From each state $A[i,j,v]$ we have two transitions, depending on whether $v_{i+1}$ is taken to the subset or not. Each transition is computed in $\cO(m)=\cO(1)$ time. This gives time $\cO(t \cdot z^{m-1})$ in total.

The array is equipped with a standard technique to recover the set $S$ (parents of states). The final answer is computed by checking, for each vector $v \in \{0,\ldots,z\}^{m-1}$ such that $v[i] \ge x[i]$, for all $i=1,\ldots,m-1$, if $A[t,\kappa,v] \ge x[m]$.
\end{proof}

Overall, we pay an additional $\cO(z^{m-1})$ factor in the complexity of handling of Case~2, which yields the complexity of Theorem~\ref{thm:MPPSM}.

\section{A Data Structure for \PPSM Queries}\label{sec:DS} 

We next show algorithms and data structures for the $\PPSM$ problem under the assumption that $2^\ell$ is reasonably small.  
We measure space in terms of $w$-bit machine words, where $w=\Omega(\log (d\ell))$, and focus on showing space vs.~query-time trade-offs for answering $q,z$ $\PPSM$ queries over $\D$. A summary of the complexities of the data structures is shown in~\cref{tab:DSs}. Specifically, algorithm {\sc Small-$\ell$} and data structure {\sc Simple} are used as building blocks in the more involved data structure {\sc Split} underlying the following theorem.

\begin{theorem}\label{the:DS}
There exists a data structure that answers $q,z$ $\PPSM$ queries over $\D$ in time $\cO(2^{\ell/2}(2^{\ell/2}+\tau)\ell)$ w.h.p.~and requires space $\cO(2^{\ell}d^2/\tau^2+2^{\ell/2}d)$, for any $\tau\in[1,d]$.
\end{theorem}

\paragraph{Algorithm {\sc Small-$\ell$}: $\cO(d\ell)$ Space, $\cO(2^\ell\ell+d\ell)$ Query Time.}

Our algorithm is based on the Fast zeta/Möbius transform~\cite[Theorem 10.12]{DBLP:books/sp/CyganFKLMPPS15}. No data structure on top of the dictionary $\D$ is stored. In the query algorithm, we initialize an integer array $A$ of size $2^\ell$ with zeros. 
For an $\ell$-bit vector $m$, by $K_m \subseteq\{1,\ldots,\ell\}$ let us denote the set of the positions of set bits of $m$.
Now for every possible $\ell$-bit vector $m$ we want to compute the number of strings in $\D$ that match $q_{K_m}=q\otimes K_m$. 

To this end, for every string $s \in\D$, we compute the set $K$ of positions in which $s$ and $q$ differ. For $m$ that satisfies $K=K_m$, we increment $A[m]$, where $m$ is the integer representation of the bit vector.
This computation takes $\cO(d\ell)$ time and $\cO(1)$ extra space. 
Then we apply a folklore dynamic-programming-based approach 
to compute an integer array $B$, which is defined as follows: \[B[m]=\sum_{j \in S(m)} A[j] \text{, where }  S(m)=\{j \in [1,2^\ell] : K_j \subseteq K_m\}.\] 
In other words, $B[m]$ stores the number of strings from $\D$ that match $q_{K_m}$. 

We provide a description of the folklore algorithm here for completeness.
Consider a vector (mask) $m$. Let $S(m,i)$ consist of the subsets of $m$ which do not differ from $m$ but (possibly) in the rightmost $i$ bits, and
\[B[m,i]=\sum_{j \in S(m,i)} A[j].\] 
Clearly, $S(m)$ is equal to $S(m,\ell)$, and hence $B[m]$ is equal to $B[m,\ell]$. 
The following equation is readily verified (in the first case, since the $i$th bit of $m$ is $0$, no element of $S(m)$ can have the $i$th bit set):
\[B[m,i]= \begin{cases}
	B[m,i-1] &  \text{if the $i$th bit of $m$ is $0$,} \\
    B[m,i-1]\textsf{ OR } B[m \textsf{ XOR } 2^i,i-1] & \text{if the $i$th bit of $m$ is $1$.}
\end{cases}\]

By \textsf{OR} and \textsf{XOR} we denote the standard bitwise operations. Overall, there are $\cO(2^\ell \ell)$ choices for $m$ and $i$. We can compute $B[\cdot,\cdot]$ column by column, in constant time per entry, thus obtaining an $\cO(2^\ell \ell)$-time algorithm.
We can limit the space usage to $\cO(2^\ell)$ by discarding column $i$ when we are done computing column $i+1$.

Thus, overall, the (query) time required by algorithm \textsc{Small-$\ell$} is $\cO(\ell2^\ell\!+\!d\ell)$, the data structure space is $\cO(d\ell)$, and the extra space is $\cO(2^\ell)$.\\

We now present \textsc{Simple}, an auxiliary data structure, which we will apply later on to construct DS \textsc{Split}, a data structure with the space/query-time trade-off of Theorem~\ref{the:DS}.

\paragraph{DS {\sc Simple}: $\cO(2^\ell d)$ Space, $\cO(2^\ell \ell)$ Query Time.}
We initialize an empty set $\mathcal Q$. For each possible subset of $\{1,\ldots,\ell\}$ we do the following.
We mask the corresponding positions in all strings from $\mathcal{D}$ and then sort the masked strings lexicographically.
By iterating over the lexicographically sorted list of the masked strings, we count how many copies of each distinct (masked) string we have in our list.
We insert each such (masked) string to $\mathcal Q$ along with its count.
After processing all $2^\ell$ subsets, we construct a compacted trie for the strings in $\mathcal Q$; each leaf corresponds to a unique element of $\mathcal Q$, and stores this element's count.
The total space occupied by this compacted trie is thus $\cO(2^\ell d)$.
Upon an on-line query $q$ (of length $\ell$) and $z$, we apply all possible $2^\ell$ masks to $q$ and read the count for each of them from the compacted trie in $\cO(\ell)$ time per mask. 
Next, we show how to decrease the exponential dependency on $\ell$ in the space complexity when $2^\ell =o(d)$, incurring extra time in the query.

\begin{table}[t]
    \centering
    \begin{tabular}{c|c|c}
    Data structure & Space & Query time \\\hline
    Algorithm \textsc{Small-$\ell$} & $\cO(d\ell)$ & $\cO(2^\ell\ell+d\ell)$ \\
    DS {\sc Simple} & $\cO(2^\ell d)$ & $\cO( 2^\ell \ell)$  \\
    DS {\sc Split}, any $\tau$ &  $\cO(2^{\ell}d^2/\tau^2+2^{\ell/2}d)$ & $\cO(2^{\ell/2} \cdot (2^{\ell/2}+\tau)\ell)$ \\
    DS {\sc Split} for $\tau=2^{\ell/4}\sqrt{d}$ &  $\cO(2^{\ell/2}d)$ & $\cO(2^\ell\ell+2^{3\ell/4}\sqrt{d}\ell)$
    \end{tabular}
    \caption{Basic complexities of the data structures from~\cref{sec:DS}.}
    \label{tab:DSs}
\end{table}

\paragraph{DS {\sc Split}: $\cO(2^{\ell}d^2/\tau^2+2^{\ell/2}d)$ Space, $\cO(2^{\ell/2} \cdot (2^{\ell/2}+\tau)\ell)$ Query Time, for any $\tau$.}
This trade-off is relevant when $\tau =\omega(\sqrt{d})$; otherwise the DS \textsc{Simple} is better. 

We split each string $p \in \D$ roughly in the middle, to prefix $p_{L}$ and suffix $p_{R}$; specifically, $p=p_Lp_R$ and $|p_L|=\lceil \ell/2 \rceil$.
We create dictionaries $\D_L\!=\!\{p_L \!:\! p\! \in\! \D\}$ and $\D_R\!=\!\{p_R \!:\! p\! \in\! \D\}$.
Let us now explain how to process $\D_L$; we process $\D_R$ analogously.
Let $\lambda\!=\!\lceil \ell/2 \rceil$.
We construct DS \textsc{Simple} over $\D_L$. 
This requires space $\cO(2^{\ell/2}d)$. Let $\tau$ be an input parameter, intuitively used as the minimum frequency threshold. For each of the possible $2^\lambda$ masks, we can have at most $\lfloor d/\tau \rfloor$ (masked) strings with frequency at least $\tau$. 
Over all masks, we thus have at most $2^\lambda \lfloor d/\tau \rfloor$ such strings, which we call \emph{$\tau$-frequent}. 
For every pair of $\tau$-frequent strings, one from $\D_L$ and one from $\D_R$, we store the number of occurrences of their concatenation in $\D$ using a compacted trie as in DS \textsc{Simple}. 
This requires space $\cO(2^{\ell}d^2/\tau^2)$.

Consider $\D_{L}$. For each mask $i$ and each string $p_L\in \D_{L}$, we can afford to store the list of all strings in $\D_L$ that match $p_L \otimes i$. Note that we have computed this information when sorting for constructing DS \textsc{Simple} over $\D_L$. This information requires space $\cO(2^{\ell/2}d)$. Thus, DS \textsc{Split} requires $\cO(2^{\ell}d^2/\tau^2+2^{\ell/2}d)$ space overall.

Let us now show how to answer an on-line $q,z$ query. 
Let $q=q_Lq_R$ with $|q_L|=\lceil \ell/2 \rceil$.
We iterate over all possible $2^{\ell}$ masks. 

For a mask $i$, let $q'=q\otimes i$.
We split $q'$ into two halves, $q'_L$ and $q'_R$  with $q'=q'_Lq'_R$ and $|q'_L|=\lceil \ell/2 \rceil$.
First, we check whether each of $q'_L$ and $q'_R$ is $\tau$-infrequent using the DS \textsc{Simple} we have constructed for $\D_L$ and $D_R$, respectively, in time $\cO(\ell)$. We have the following two cases (inspect also Figure~\ref{fig:split}).
\begin{itemize}
    \item If both halves are $\tau$-frequent, then we can read the frequency of their concatenation using the stored compacted trie in time $\cO(\ell)$.
    \item Else, at least one of the two halves is $\tau$-infrequent. Assume without loss of generality that $q'_L$ is $\tau$-infrequent. Let $\mathcal F$ be the dictionary consisting of at most $\tau$ strings from $\D_R$ that correspond to the right halves of strings in $\D_L$ that match $q'_L$. 
    Na\"{\i}vely counting how many elements of $\mathcal F$ match $q'_R$ could require $\Omega(\tau \ell)$ time, and thus $\Omega(2^\ell \tau \ell)$ overall.
    Instead, we apply algorithm \textsc{Small}-$\ell$ on $q_R$ and $\mathcal F$. 
    The crucial point is that if we ever come across $q'_L$ again (for a different mask on $q$), we will not need to do anything. 
    We can maintain whether $q'_L$ has been processed by temporarily marking the leaf corresponding to it in DS {\sc Simple} for $\D_L$.
    Thus, overall, we perform the \textsc{Small}-$\ell$ algorithm $\cO(2^{\ell/2})$ times, each time in $\cO((2^{\ell/2}+\tau)\ell)$ time.
    This completes the proof of Theorem~\ref{the:DS}. 
\end{itemize}

\paragraph{Efficient Construction.} For completeness, we next show how to construct DS \textsc{Split} in $\cO(d\ell \log (d\ell)+2^\ell d \ell+2^{\ell}\ell d^2/\tau^2)$ time. We preprocess $\mathcal{D}$ by sorting its letters in $\cO(d\ell \log (d\ell))$ time. The DS \textsc{Simple} for $\D_L$ and $\D_R$ can then be constructed in $\cO(2^{\ell/2}d \ell)$ time. We then create the compacted trie for pairs of $\tau$-frequent strings. 
For each of the $2^\ell$ possible masks, say $i$, and each string $p \in \D$, we split $p'=p\otimes i$ in the middle to obtain $p'_L$ and $p'_R$. If both $p'_L$ and $p'_R$ are $\tau$-frequent then $p'$ will be in the set of strings for which we will construct the compacted trie for pairs of $\tau$-frequent strings. 
The counts for each of those strings can be read in $\cO(\ell)$ time from a DS \textsc{Simple} over $\D$, which we can construct in time $\cO(2^{\ell}d \ell)$---this data structure is then discarded. The compacted trie construction requires time $\cO(2^{\ell}\ell d^2/\tau^2)$.

\begin{figure}[!t]
        \begin{center}
        \includegraphics[trim={0 0 0 0},clip,width=0.85\textwidth]{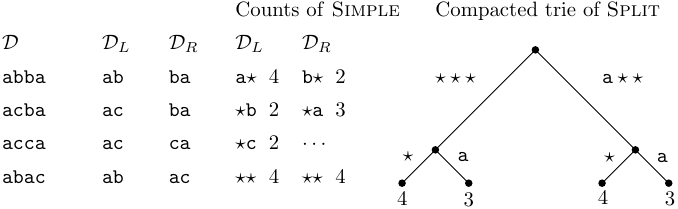}
        \end{center}
        \caption{Let $\tau=3$. If both $q'_L$ and $q'_R$ are $3$-frequent (we check this using the counts of DS {\sc Simple}), then we read the count for $q'_Lq'_R$ from the compacted trie of DS {\sc Split}. If $q'_L$ is $3$-infrequent, then we apply \textsc{Small-$\ell$} on $q_R$ and on the dictionary consisting of at most $\tau=3$ strings from $\mathcal{D}_R$ corresponding to the right halves of strings in $\mathcal{D}_L$ that match $q'_L$.}
        \label{fig:split}
\end{figure}

\paragraph{Comparison of the Data Structures.} DS {\sc Simple} has lower query time than algorithm {\sc Small-$\ell$}. However, its space complexity can be much higher. DS {\sc Split} can be viewed as an intermediate option. For $\tau$ as in~\cref{tab:DSs}, it has lower query time than algorithm {\sc Small-$\ell$} for $d=\omega(2^{3\ell/2})$, while keeping moderate space complexity. DS {\sc Split} always has higher query time than DS {\sc Simple}, but its space complexity is lower by a factor of $2^{\ell/2}$. For example, for $d=2^{2\ell}$ we get the complexities  shown in~\cref{tab:DSs2l}.

\begin{table}[htpb]
    \centering
    \begin{tabular}{c|c|c}
    Data structure & Space & Query time \\\hline
    Algorithm \textsc{Small-$\ell$} & $\cO(2^{2\ell}\ell)$ & $\cO(2^{2\ell}\ell)$ \\
    DS {\sc Simple} & $\cO(2^{3\ell})$ & $\cO(2^\ell \ell)$  \\
    DS {\sc Split} for $\tau=2^{5\ell/4}$ &  $\cO(2^{5\ell/2})$ & $\cO(2^{7\ell/4}\ell)$
    \end{tabular}
    \caption{Basic complexities of the data structures from~\cref{sec:DS} for $d=2^{2\ell}$.}
    \label{tab:DSs2l}
\end{table}

Let us now discuss why our data structure results cannot be directly obtained using the same data structures as for the problem Dictionary Matching with $k$-wildcards (see~\cref{sec:intro} for the problem definition).  Conceivably, one could construct such a data structure, and then iterate over all subsets of $\{1, \ldots, \ell\}$, querying for the masked string. Existing data structures for dictionary matching with wildcards (cf.~\cite[Table 1]{DBLP:journals/mst/BilleGVV14}, \cite{LNV14-stacs}, and~\cite{DBLP:conf/esa/GawrychowskiLN14}), that allow querying a pattern with at most $\ell$ wildcards, have:
\begin{enumerate}
    \item[(a)] either $\Omega(\min\{\sigma^\ell, d\})$ query time, thus yielding $\Omega(2^\ell \cdot \min\{\sigma^\ell, d\})$ query time for our problem, and space $\Omega(d\ell)$, a trade-off dominated by the \textsc{Small}-$\ell$ algorithm (cf.~our~\cref{tab:DSs});
    \item[(b)] or $\Omega(\ell)$ query time, thus yielding $\Omega(2^\ell \ell)$ query time for our problem, and $\Omega(d \ell \log^\ell \log (d\ell))$ space, a trade-off dominated by the DS \textsc{Simple} (cf.~our~\cref{tab:DSs}).
\end{enumerate}

\section{Approximation Algorithm for \PPSM}\label{sec:approx}

Clearly, \PPSM is at least as hard as \PPSMz
because it also outputs the positions of the wildcards (set $K$). Thus, \PPSM is also NP-hard. In what follows, we show existence of a polynomial-time approximation algorithm for \PPSM whose  approximation factor is given with respect to $d$. Specifically, we show the following approximation result for \PPSM.

\begin{theorem}\label{the:approx}
For any constant $\epsilon>0$, there is an $\cO(d^{1/4+\epsilon})$-approximation algorithm for \PPSM, whose running time is polynomial in $N$, where $N=d\ell$.
\end{theorem}

Our result is based on a reduction to the Minimum Union (\MU) problem~\cite{mku_min}, which we define next.

\defproblem{Minimum Union (\MU)}{A collection $\mathcal{S}$ of $d$ sets over a universe $U$ and a positive integer $z\leq d$.}{A collection $\mathcal{T} \subseteq \mathcal{S}$ with $|\mathcal{T}|=z$ such that the size of $\cup_{S\in \mathcal{T}}S$ is minimized.}

To illustrate the \MU problem, consider an instance of it with \[U=\{1,2,3,4,5\} \text{ and } \mathcal{S}=\{\{1\},\{1,2,3\},\{1,3,5\},\{3\},\{3,4,5\},\{4\},\{4,5\},\{5\}\},\] where $d=|\mathcal{S}|=8$, and $z=4$. Then $\mathcal{T}=\{\{3\},\{3,4,5\},\{4\},\{4,5\}\}$ is a solution because $|\mathcal{T}|=z=4$ and $|\cup_{S\in \mathcal{T}}S|=3$ is minimum. The \MU problem is NP-hard and the following approximation result is known.

\begin{theorem}[\cite{mku_min}] 
For any constant $\epsilon>0$, there is an $\cO(d^{1/4+\epsilon})$-approximation algorithm for \MU, whose running time is polynomial in the size of $\mathcal{S}$.\label{thm:mkuapprox}
\end{theorem}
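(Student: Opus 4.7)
Since Theorem~\ref{thm:mkuapprox} is quoted verbatim from~\cite{mku_min}, the ``proof'' that will follow in the paper is essentially a pointer to that reference. Consequently, rather than attempt to reconstruct it in full, I will sketch how one would attack \MU from first principles in order to arrive at the stated $\cO(d^{1/4+\epsilon})$ factor.

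The natural starting point is a linear programming relaxation of \MU. I would introduce a variable $x_S \in [0,1]$ for each $S \in \mathcal{S}$ (the indicator of selecting $S$) and a variable $y_u \in [0,1]$ for each $u \in U$ (the indicator that $u$ lies in the union), with constraints $\sum_S x_S = z$ and $y_u \ge x_S$ for every $u \in S$, and objective $\min \sum_u y_u$. Naive threshold rounding already delivers some approximation, but the integrality gap of this basic LP is far larger than $d^{1/4+\epsilon}$, so more refined ideas are needed.

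The key conceptual step is to exploit the close relationship between \MU and Densest $k$-Subgraph (DkS), for which Bhaskara, Charikar, Chlamt\'{a}\v{c}, Feige and Vijayaraghavan obtained a matching $\cO(n^{1/4+\epsilon})$-approximation via a ``log-density'' framework. Encoding \MU by the bipartite incidence graph between $\mathcal{S}$ and $U$, the selection of $z$ sets is a subset on the ``set side'' and the union size is the number of neighbours on the ``element side''. After guessing, by enumerating over $\cO(\log d)$ many buckets, the size of the optimum union and the typical element-frequency, I would either enumerate directly when the guessed parameters place the instance in an ``easy'' regime, or invoke a Lasserre- or Sherali--Adams-strengthened SDP in the spirit of the DkS algorithm, producing a fractional solution that can be rounded within the target factor.

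The hard part will be the balancing argument across regimes: for each regime defined by the guessed union size and degree profile, one must design a rounding routine whose contribution to the approximation ratio is at most $\cO(d^{1/4+\epsilon})$, and then compose the regimes so that no single phase dominates. Matching the $1/4+\epsilon$ exponent is delicate because each phase naturally incurs a quartic-root loss and leaves essentially no slack. For the purposes of the surrounding appendix, however, none of these internals matter: the forthcoming reduction from \PPSM to \MU only invokes Theorem~\ref{thm:mkuapprox} as a black box, plugging its approximation factor in to establish Theorem~\ref{the:approx}.
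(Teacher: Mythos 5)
The paper offers no proof of this statement---it is imported verbatim from~\cite{mku_min}---and you correctly treat it as a black-box citation, which is all the surrounding argument (the reduction from \PPSM to \MU establishing Theorem~\ref{the:approx}) actually requires. Your supplementary sketch points in the right general direction (the cited algorithm for \MU is indeed an LP-relaxation-plus-rounding analysis guided by the log-density framework developed for Densest $k$-Subgraph), but it is avowedly speculative, is not load-bearing for this paper, and should not be mistaken for a reconstruction of the proof in~\cite{mku_min}.
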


We next describe the reduction that leads to our result.

\begin{theorem}
\PPSM can be reduced to \MU in time polynomial in $N$.\label{thm:PPSMtoMkU}
\end{theorem}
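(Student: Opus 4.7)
The plan is to convert each dictionary string into the set of positions where it disagrees with $q$, and then argue that \MU on these sets exactly mirrors \PPSM. Concretely, for each $s \in \D$ I would compute the mismatch set $M_s = \{i \in \{1,\ldots,\ell\} : s[i] \neq q[i]\}$ in $\cO(\ell)$ time, yielding a collection $\mathcal{S} = (M_s)_{s \in \D}$ of $d$ sets over universe $U = \{1,\ldots,\ell\}$. The reduced \MU instance consists of $\mathcal{S}$, $U$, and the same parameter $z$. The construction takes $\cO(N)$ time, so the reduction is polynomial.

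The key observation driving correctness is that, for a fixed subcollection $T \subseteq \D$, a mask $K \subseteq \{1,\ldots,\ell\}$ makes $q \otimes K$ match every $s \in T$ if and only if $K \supseteq \bigcup_{s \in T} M_s$, because a wildcard is needed at every mismatch position of every $s \in T$ and nowhere else. Hence, for each choice of $z$ dictionary strings $T$, the smallest mask that matches all of $T$ has size exactly $|\bigcup_{s \in T} M_s|$. Minimizing over all $T$ of size $z$ on the \PPSM side thus coincides with minimizing $|\bigcup_{M \in \mathcal{T}} M|$ over all $\mathcal{T} \subseteq \mathcal{S}$ of size $z$ on the \MU side, so the two optima are equal. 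To recover a \PPSM solution from an \MU solution $\mathcal{T}$, I would simply output $K = \bigcup_{M \in \mathcal{T}} M$; every string $s$ whose $M_s \in \mathcal{T}$ is then matched by $q \otimes K$, giving at least $z$ matches.

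Two subtleties to handle but no real obstacle: (i) any $s \in \D$ equal to $q$ yields $M_s = \emptyset$, which is harmless since empty sets contribute nothing to the union in \MU and are matched by $q \otimes K$ for every $K$ in \PPSM; and (ii) $\mathcal{S}$ must be treated as a multiset so that distinct dictionary strings with identical mismatch sets are counted separately, which is consistent with the definition of \MU stated earlier. I do not expect a substantial obstacle beyond verifying these bookkeeping points; the entire argument rests on the ``wildcard-at-mismatch'' equivalence in the middle paragraph.
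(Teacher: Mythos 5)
Your reduction is correct and is essentially the same as the paper's: both map each dictionary string to its set of mismatch positions with $q$ over universe $\{1,\ldots,\ell\}$, keep the same $z$, and recover a \PPSM solution as the union of the sets in the \MU solution. The only difference is presentational --- you argue correctness by directly equating the two optima via the observation that $q\otimes K$ matches $s$ iff $K\supseteq M_s$, whereas the paper gives a two-directional solution mapping (including a greedy selection of $z$ sets covering $K$); your version is, if anything, tidier.
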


\begin{proof}
We reduce the \PPSM problem to \MU in polynomial time as follows. Given any instance $\mathcal{I}_{\PPSM}$ of \PPSM, we construct an instance  $\mathcal{I}_{\text{MU}}$ of \MU  in time $\cO(d\ell)$ by performing the following steps: 
\begin{enumerate}
\item The universe $U$ is set to $\{1,\ldots, \ell\}$. 
\item We start with an empty collection $\mathcal{S}$. Then, for each string $s_i$ in $\mathcal{D}$, we add member $S_i$ to $\mathcal{S}$, where $S_i$ is the set of positions where string $q$ and string $s_i$ have a mismatch. This can be done trivially in time $\cO(d\ell)$ for all strings in $\mathcal{D}$.  
\item Set the $z$ of the \MU problem to the $z$ of the \PPSM problem. 
\end{enumerate}

Thus, the total time $\cO(d\ell)$ needed for Steps 1 to 3 above is clearly polynomial in the size of $\mathcal{I}_{\PPSM}$. 

\begin{claim}\label{claim:eq}
For any solution $\mathcal{T}$ to $\mathcal{I}_{\text{MU}}$ and any solution $K$ to $\mathcal{I}_{\PPSM}$, such that $\mathcal{I}_{\text{MU}}$ is obtained from $\mathcal{I}_{\text{\PPSM}}$ using the above three steps, we have $|K|=|\cup_{S\in \mathcal{T}}S|$.
\end{claim}
\begin{proof}[of Claim]
Let $\mathcal F \subseteq \D$ consist of $z$ strings that match $q_K$. Further, let the set $\mathcal F^*$ consist of the elements of $\mathcal S$ corresponding to strings in $\mathcal F$. We have
$|\cup_{S\in \mathcal{T}}S|\leq |\cup_{S\in \mathcal{F^*}}S|\leq |K|$.

Now, let $C=\cup_{S\in \mathcal{T}}S$. Then, $q_C\!=\!q\!\otimes\! C$ matches at least $z$ strings from $\D$ and hence $|K| \leq |C|= |\cup_{S\in \mathcal{T}}S|$.
\end{proof}

To conclude the proof, it remains to show that given a solution $\mathcal{T}$ to $\mathcal{I}_{\text{MU}}$ we can obtain a solution $K$ to $\mathcal{I}_{\PPSM}$ in time polynomial in the size of $\mathcal{I}_{\text{MU}}$.
This readily follows from the proof of the above claim: it suffices to set $K=\cup_{S\in \mathcal{T}}S$. 
\end{proof}

We can now prove Theorem~\ref{the:approx}.

\begin{proof}[of Theorem~\ref{the:approx}.] 
The reduction in Theorem~\ref{thm:PPSMtoMkU} implies that there is a polynomial-time approximation algorithm for \PPSM. In particular, Theorem~\ref{thm:mkuapprox} provides an approximation guarantee for \MU that depends on the number of sets of the input $\mathcal{S}$. In Step 2 of the reduction of Theorem~\ref{thm:PPSMtoMkU}, we construct \emph{one} set for the \MU instance per \emph{one} string of the dictionary $\mathcal{D}$ of the \PPSM instance. Also, from the constructed solution $\mathcal{T}$ to the \MU instance, we obtain a solution $K$ to the \PPSM instance by simply substituting the positions of $q$ corresponding to the elements of the sets of $\mathcal{T}$ with wildcards. This construction implies the approximation result of Theorem~\ref{the:approx} that depends on the size of $\mathcal{D}$.
\end{proof}

Applying Theorem~\ref{the:approx} to solve \PPSM is not practical, as in real-world applications, such as those in~\cref{sec:intro}, $d$ is typically in the order of thousands or millions~\cite{Christen2020lsd,eatkde,ntoulas,sigir17}.

\paragraph{Sanity Check.} We remark that Theorem~\ref{thm:red} (reduction from \kCl to \PPSMzw) and Theorem~\ref{the:approx} (approximation algorithm for \PPSM) do not contradict the inapproximability results for the maximum clique problem (see Section~\ref{sec:hardness}), since our reduction from \kCl to \PPSMzw cannot be adapted to a reduction from maximum clique to \PPSMz.

\paragraph{Two-Way Reduction.} Chlamt\'{a}\v{c} et al.~\cite{mku_min} additionally show that their polynomial-time $\cO(d^{1/4+\epsilon})$-approximation algorithm for \MU is tight under a plausible conjecture for the so-called Hypergraph Dense vs Random problem. In what follows, we also show that approximating the \MU problem can be reduced to approximating \PPSM in polynomial time and hence the same tightness result applies to \PPSM.

\begin{theorem}
\MU can be reduced to \PPSM in time polynomial in the size of $\mathcal{S}$.
\end{theorem}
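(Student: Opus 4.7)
The plan is to give the natural mirror of the reduction in Theorem~\ref{thm:PPSMtoMkU}. Given an \MU instance with universe $U$, collection $\mathcal{S}=\{S_1,\ldots,S_d\}$, and threshold $z$, I identify $U$ with $\{1,\ldots,\ell\}$ where $\ell=|U|$, fix the binary alphabet $\{\texttt{a},\texttt{b}\}$, and construct a \PPSM instance as follows: set $q=\texttt{a}^\ell$; for each $S_i\in\mathcal{S}$, build the string $s_i$ of length $\ell$ defined by $s_i[j]=\texttt{b}$ if $j\in S_i$ and $s_i[j]=\texttt{a}$ otherwise; take $\mathcal{D}=\{s_1,\ldots,s_d\}$; and keep $z$ unchanged. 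The construction is clearly polynomial.

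The key structural observation is that, for any $K\subseteq\{1,\ldots,\ell\}$, the masked query $q\otimes K$ matches $s_i$ if and only if $S_i\subseteq K$, because $s_i$ and $q$ agree exactly at the positions outside $S_i$. Hence the \PPSM problem on this instance is equivalent to computing the smallest $K\subseteq\{1,\ldots,\ell\}$ such that $|\{i:S_i\subseteq K\}|\ge z$.

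The two directions of equivalence between optimal solutions follow easily from this observation. In one direction, if $\mathcal{T}\subseteq\mathcal{S}$ is an optimal \MU solution, then $K:=\bigcup_{S\in\mathcal{T}}S$ is a feasible \PPSM mask of size $|\bigcup_{S\in\mathcal{T}}S|$, so the \PPSM optimum has size at most the \MU optimum. In the other direction, given an optimal \PPSM mask $K$, let $\mathcal{T}^\star=\{S_i:S_i\subseteq K\}$; by minimality of $K$, we must have $\bigcup_{S\in\mathcal{T}^\star}S=K$ (otherwise the strictly smaller mask $\bigcup_{S\in\mathcal{T}^\star}S$ would still match the same $\ge z$ strings, contradicting the optimality of $K$). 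Since $|\mathcal{T}^\star|\ge z$, any $z$-subset $\mathcal{T}\subseteq\mathcal{T}^\star$ is a feasible \MU solution with $|\bigcup_{S\in\mathcal{T}}S|\le|K|$, so the \MU optimum is at most the \PPSM optimum. Both optima therefore coincide.

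The only subtle point — and the main thing to check carefully — is this equality between the optima, specifically the minimality argument showing $\bigcup_{S\in\mathcal{T}^\star}S=K$, as well as verifying that picking any $z$ sets from $\mathcal{T}^\star$ is acceptable for \MU (their union can only shrink from $K$, which is fine because the minimum is already witnessed by $K$ on the other side of the inequality). Once this is established, the reduction is constructive in both directions in polynomial time, completing the proof.
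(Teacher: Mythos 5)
Your proposal is correct and follows essentially the same route as the paper's proof: the identical construction ($q=\texttt{a}^\ell$, one string per set with \texttt{b} at its elements' positions, same $z$), the same key observation that $q\otimes K$ matches $s_i$ iff $S_i\subseteq K$, and the same minimality argument showing that any $z$ sets contained in an optimal mask $K$ yield an optimal \MU solution. The only cosmetic difference is that you index positions directly by $U$ rather than via the paper's explicit ranking step.
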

\begin{proof} 
Let $||\mathcal{S}||$ denote the total number of elements in the $d$ members of $\mathcal{S}$.
We reduce the \MU problem to the \PPSM  problem in polynomial time as follows. 
Given any instance $\mathcal{I}_{\text{\MU}}$ of \MU, we construct an instance $\mathcal{I}_{\PPSM}$ of \PPSM by performing the following steps: 

\begin{enumerate}
\item  Sort the union of all elements of members of $\mathcal{S}$, assign to each element $j$ a unique rank $\textsf{rank}(j) \in \{1,\ldots,|U|\}$, and set $\ell=|U|$. This can be done in $\cO(||\mathcal{S}|| \log ||\mathcal{S}||)$ time.
\item Set the query string $q$ equal to the string $\texttt{a}^\ell$ of length $\ell$.
For each set $S_i$ in $\mathcal{S}$, construct a string $s_i=\texttt{a}^\ell$, set $s_i[\textsf{rank}(j)]:=\texttt{b}$ if and only if $j\in S_i$, and add $s_i$ to dictionary $\D$. This can be done in $\cO(d\ell)$ time.
\item Set the $z$ of the \PPSM  problem equal to the $z$ of the \MU problem. This can be done in $\cO(1)$ time.
\end{enumerate}

Thus, the total time $\cO(d\ell \log(d \ell))$ needed for Steps 1 to 3 above is clearly polynomial in the size of $\mathcal{I}_{\MU}$ as $\ell\leq ||\mathcal{S}||$. 

A proof of the following claim is analogous to that of~\cref{claim:eq}.
\begin{claim}
For any solution $\mathcal{T}$ to $\mathcal{I}_{\text{MU}}$ and any solution $K$ to $\mathcal{I}_{\PPSM}$, such that $\mathcal{I}_{\text{\PPSM}}$ is obtained from $\mathcal{I}_{\text{MU}}$ using the above three steps, we have $|K|=|\cup_{S\in \mathcal{T}}S|$.
\end{claim}

To conclude the proof, it remains to show that, given a solution
$K$ to $\mathcal{I}_{\PPSM}$, we can obtain a solution $\mathcal{T}$ to $\mathcal{I}_{\text{MU}}$ 
in time polynomial in the size of $\mathcal{I}_{\PPSM}$.
It suffices to pick $z$ sets in $\mathcal{S}$ that are subsets of $K$.
Their existence is guaranteed by construction,
because such sets correspond to the at least $z$ strings in $\D$ that have \texttt{b} in a subset of the positions in $K$.
This selection can be done na\"{\i}vely in $\cO(||\mathcal{S}||)$ time.
Finally, the above claim guarantees that they indeed form a solution to $\mathcal{I}_{\text{MU}}$. 

\end{proof}

\section{A Greedy Heuristic for \PPSM}\label{sec:greedy}

We design a heuristic called \Greedy that solves \PPSM and which, for a given constant $\tau\geq 1$, iteratively applies Theorem~\ref{the:kWMDM} (see~\cref{sec:exact}), for $k=1,\ldots,\tau$. Intuitively, the larger the $\tau$, the more effective -- but the slower -- \Greedy is. Specifically, in iteration $i=1$, we apply Theorem~\ref{the:kWMDM} for $k=1,\ldots,\tau$ and check whether there are at least $z$ strings from $\mathcal{D}$ that can be matched when at most $k$ wildcards are substituted in the query string $q$. If there are, we return the minimum such $k$ and terminate. Clearly, by Theorem~\ref{the:PPSMZ}, the returned solution $K_1$ is an optimal solution to \PPSM. Otherwise, we proceed into the next iteration, $i=2$. In this iteration, we construct string $q_{K_1}=q \otimes K_1$ and apply Theorem~\ref{the:kWMDM}, for $k=1,\ldots,\tau$, to $q_{K_1}$. This returns a solution $K_2$ telling us whether there are at least $z$ strings from $\mathcal{D}$ that can be matched with $q_{K_2}=q_{K_1}\otimes K_2$. If there are, we return $K_1\cup K_2$, which is now a (sub-optimal) solution to \PPSM, and terminate. Otherwise, we proceed into iteration $i=3$, which is analogous to iteration $i=2$. Note that \Greedy always terminates with some (sub-optimal) solution $K_1\cup K_2\cup \cdots \cup K_j$, for some $j\leq \lceil \ell/\tau \rceil$. Namely, in the worst case, it returns set $\{1,\ldots,\ell\}$ after $\lceil \ell/\tau \rceil$ iterations and $q_{\{1,\ldots,\ell\}}$ matches all strings in $\mathcal{D}$. The reason why \Greedy does not guarantee finding an optimal solution to \PPSM is that at iteration $i$ {\em we fix} the positions of wildcards based on solution $K_1\cup\cdots \cup K_{i-1}$, whereas some of those positions might not belong to the global optimal solution. 

Since $\tau=\cO(1)$, the time complexity of \Greedy is $\cO((N+N^{\tau/3})\ell)$: each iteration takes time $\cO(N+N^{\tau/3})$ by Theorem~\ref{the:kWMDM}, and then there are no more than $\lceil \ell/\tau \rceil=\cO(\ell)$ iterations. The space complexity of \Greedy is $\cO(N)$. The hypergraph $H=(V,E)$ used in the implementation of Theorem~\ref{the:kWMDM} has edges of size up to $k$. If every string in $\mathcal{D}$ has more than $k$ mismatches with $q$, then all edges in $H$ have size larger than $k$. In this case, we preprocess the hypergraph $H$, as detailed below. The objective is to remove selected nodes and edges from $H$, so that it has at least one edge of size up to $k$ and then apply \Greedy. 

\paragraph{Hypergraph Preprocessing.} Let us now complete the description of our heuristic, by describing the hypergraph preprocessing. We want to ensure that hypergraph $H=(V,E)$ has at least one edge of size up to $k$ so that \Greedy can be applied. To this end, if there is no edge of size up to $k$ at some iteration, then we add some nodes into the partial solution with the following heuristic. 
\begin{enumerate} 
\item We assign a score $s(u)$ to each node $u$ in $V$ using the function: 

$$s(u)=|E_u|\cdot \frac{\sum_{e\in E_u}w(e)}{\sum_{e\in E_u}|e|},$$

\noindent where $E_u=\{e \in E: u \in e\}$ and $w(e)$ is the edge weight. 
\item Then, we add the node with maximum score from $H$ (breaking ties arbitrarily) into the partial solution and update the edges accordingly. 
\end{enumerate}
These two steps are repeated until there is at least one edge of size up to $k$; this takes $\cO(d\ell^2)$ time. 
After that, we add the removed nodes into the current solution $K_{k}$ and use the resulting hypergraph to apply \Greedy. \\

The intuition behind the above process is to add  nodes which appear in many short edges (so that we mask few positions) with large weight (so that the masked positions greatly increase the number of matched strings). We have also tried a different scoring function  $s'(u)=\sum_{e\in E_u}\frac{w(e)}{|e|}$ instead of $s(u)$, but the results were worse, and thus not reported.

\section{Experimental Evaluation}\label{sec:exp}

\paragraph{Methods.} We compared the performance of our heuristic \Greedy (henceforth, \textsc{GR} $\tau$), for the values $\tau\in[3,5]$, 
for which its time complexity is subquadratic in $N$,
to the following two algorithms:

\begin{itemize}
\item{\textsc{Baseline} (henceforth, \textsc{BA}).}~In iteration $i$, \textsc{BA} adds a node of hypergraph $H$ into $K$ and updates $H$ according to the preprocessing described in Section~\ref{sec:greedy}. If at least $z$ strings from $\mathcal{D}$ match the query string $q$ after the positions in $q$ corresponding to $K$ are replaced with wildcards, \textsc{BA} returns $K$ and terminates; otherwise, it proceeds into iteration $i+1$. Note that 
\textsc{BA} generally constructs a {\em suboptimal} solution $K$ to {\PPSM} and takes $\cO(d\ell^2)$ time.

\item{\textsc{Bruteforce} (henceforth, \textsc{BF}).}~In iteration $i$, \textsc{BF}  applies Lemma~\ref{lemma:heaviestksection} in the process of obtaining an \emph{optimal} solution $K$ of size $i=k$ to \PPSM. In particular, it checks whether at least $z$ strings from $\mathcal{D}$ match the query string $q$, after the $i$ positions in $q$ corresponding to $K$ are replaced with wildcards. If the check succeeds,  \textsc{BF} returns $K$ and terminates; otherwise, it proceeds into iteration $i+1$. \textsc{BF} takes $\cO(k(2\ell)^k+dk)$ time (see Lemma~\ref{lemma:heaviestksection}). 
\end{itemize}

Since -- as mentioned in Section~\ref{sec:intro} -- there are no existing algorithms for addressing \PPSM, in the evaluation we used our own baseline \textsc{BA}. We have implemented all of the above algorithms in \texttt{C++}. Our implementations are freely available at \url{https://bitbucket.org/pattern-masking/pmdm/}.

\paragraph{Datasets.} We used the North Carolina Voter Registration database~\cite{ncvr} (\texttt{NCVR}); a standard benchmark dataset  for record  linkage~\cite{peteracmiq16,eatkde,Karapiperis2019dpd,Vatsalan2014cikm}. \texttt{NCVR} is a collection of 7,736,911  records with attributes such as \emph{Forename}, \emph{Surname}, \emph{City}, \emph{County},  and \emph{Gender}. We generated 4 subcollections of \texttt{NCVR}: (I) \texttt{FS} is comprised of all 952,864 records having  \emph{Forename} and \emph{Surname} of total length $\ell=15$; (II) \texttt{FCi} is comprised of all 342,472 records having \emph{Forename} and \emph{City} of total length $\ell=15$; (III) \texttt{FCiCo} is comprised of all  342,472 records having \emph{Forename},  \emph{City}, and \emph{County}  of  total length $\ell=30$; and (IV) {\texttt{FSCiCo}} is comprised of all 8,238  records  having \emph{Forename}, \emph{Surname}, \emph{City} and \emph{County} of total length $\ell=45$. 

We also generated a synthetic dataset, referred to as \texttt{SYN}, using the IBM Synthetic Data  Generator~\cite{generator}, a standard tool for generating sequential datasets~\cite{ibmsd1,ibmsd2}. \texttt{SYN} contains a collection of $6\cdot 10^6$ records, each of length $\ell=50$, over an alphabet of size $|\Sigma|=10$. We also generated
subcollections of \texttt{SYN} comprised of: $x\cdot 10^6$ arbitrarily selected records; the length-$y$ prefix of each selected record. We denote each resulting dataset by \texttt{SYN}$_{x.y}$. 

\paragraph{Comparison Measures.} We evaluated the {\em effectiveness} of the algorithms using: 
\begin{description}
\item[AvgRE]~An Average Relative Error measure, computed as  avg$_{i\in[1,1000]}\frac{k_i-k^*_i}{k^*_i}$, where $k_i^*$ is the size of the optimal solution produced by \textsc{BF}, and $k_i$ is the size of the solution produced by one of  the other tested algorithms. Both $k_i^*$ and $k_i$  are obtained by using, as query $q_i$, a record of the input dictionary selected uniformly at random. 
\item[AvgSS]~An Average Solution Size measure computed as avg$_{i\in[1,1000]}k^*_i$ for \textsc{BF} and avg$_{i\in[1,1000]}k_i$ for any other algorithm. 
\end{description}

We evaluated {\em efficiency} by reporting avg$_{i\in[1,1000]}t_i$, where $t_i$ is the elapsed time of a  tested algorithm to obtain a solution for query $q_i$ over the input dictionary. 

\paragraph{Execution Environment.} In our experiments we used a PC with Intel Xeon E5-2640@2.66GHz and 160GB RAM running GNU/Linux, and a \texttt{gcc} v.7.3.1 compiler at optimization level \texttt{-O3}. 

\begin{figure}[!t]\centering
    \begin{subfigure}[b]{0.32\textwidth}
       \includegraphics[scale=0.31]{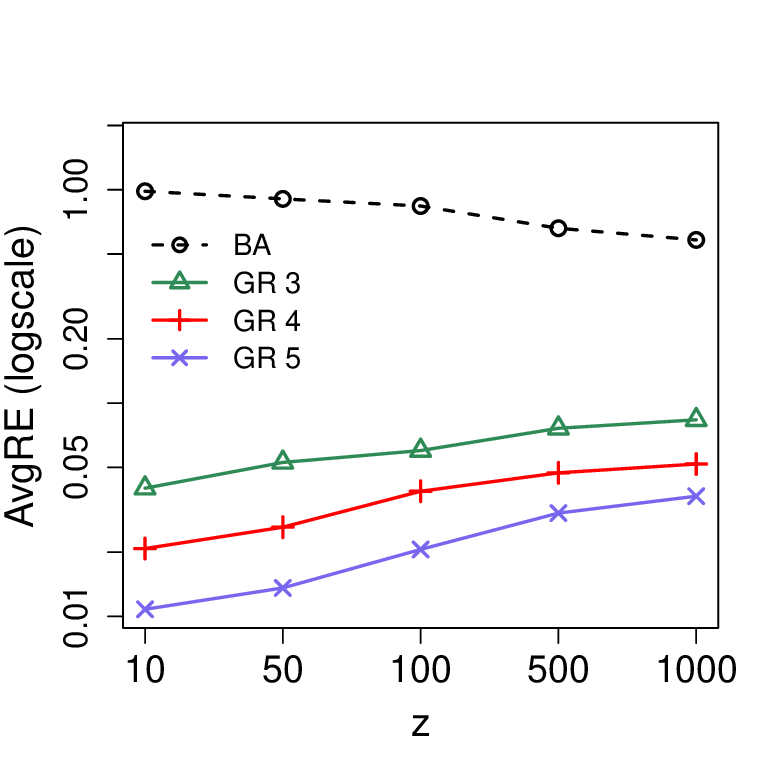}
        \caption{$d\!=\!952,\!864$ and $\ell=15$}
        \label{avgre1}
    \end{subfigure}
     \begin{subfigure}[b]{0.32\textwidth}
       \includegraphics[scale=0.31]{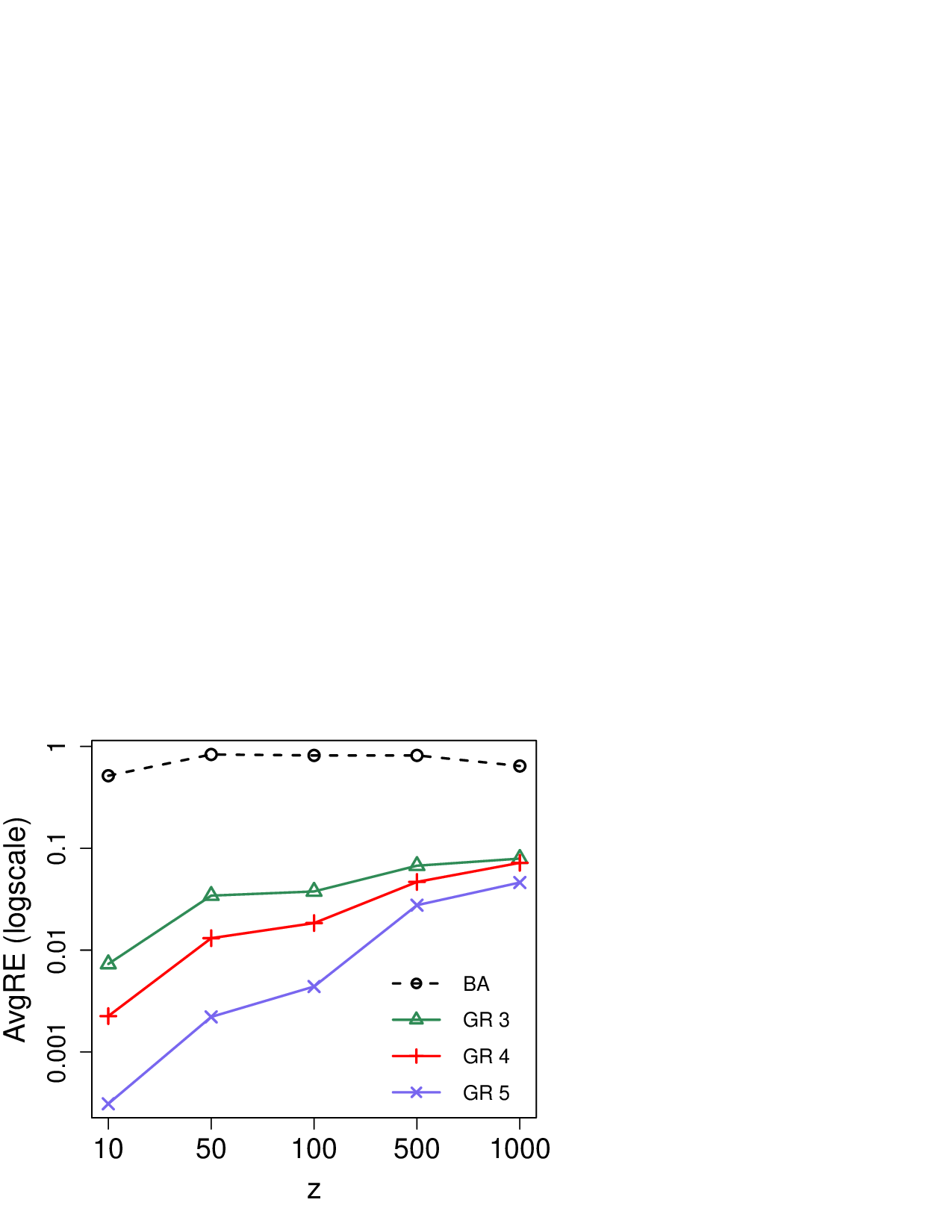}
        \caption{$d\!=\!342,\!472$ and $\ell=15$}
        \label{avgre2}
    \end{subfigure}
        \begin{subfigure}[b]{0.32\textwidth}
       \includegraphics[scale=0.27]{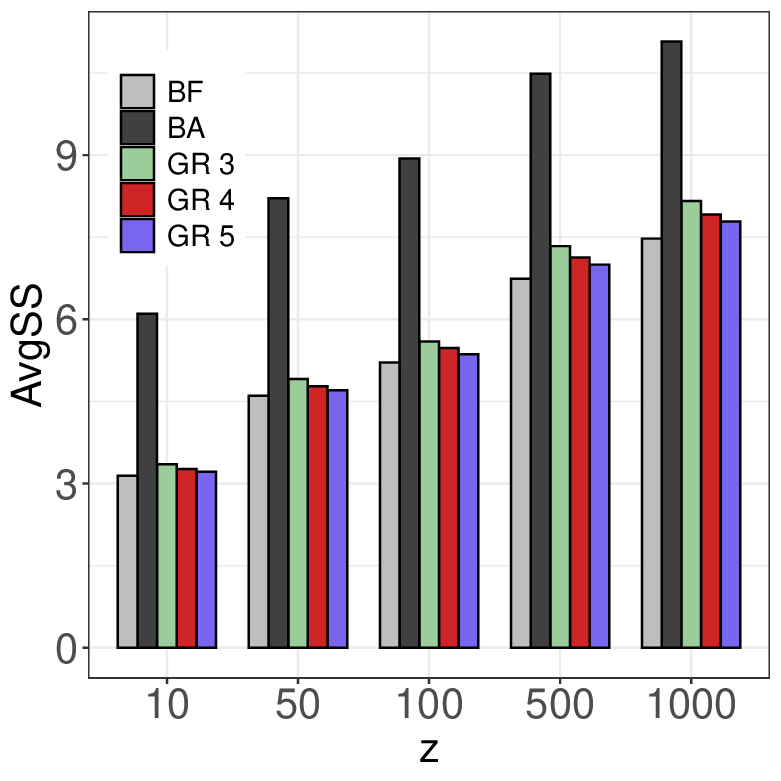}
        \caption{$d\!=\!952,\!864$ and $\ell=15$}
        \label{fullname_vs_z_ss}
    \end{subfigure}
    \caption{AvgRE (in logscale) vs.~$z$ computed for (a) \texttt{FS} and (b) \texttt{FCi}; (c) AvgSS vs.~$z$ for \texttt{FS}.} \label{avgre}
    
\end{figure}

\begin{figure}[!t]\centering
      \begin{subfigure}[b]{0.32\textwidth}
       \includegraphics[scale=0.27]{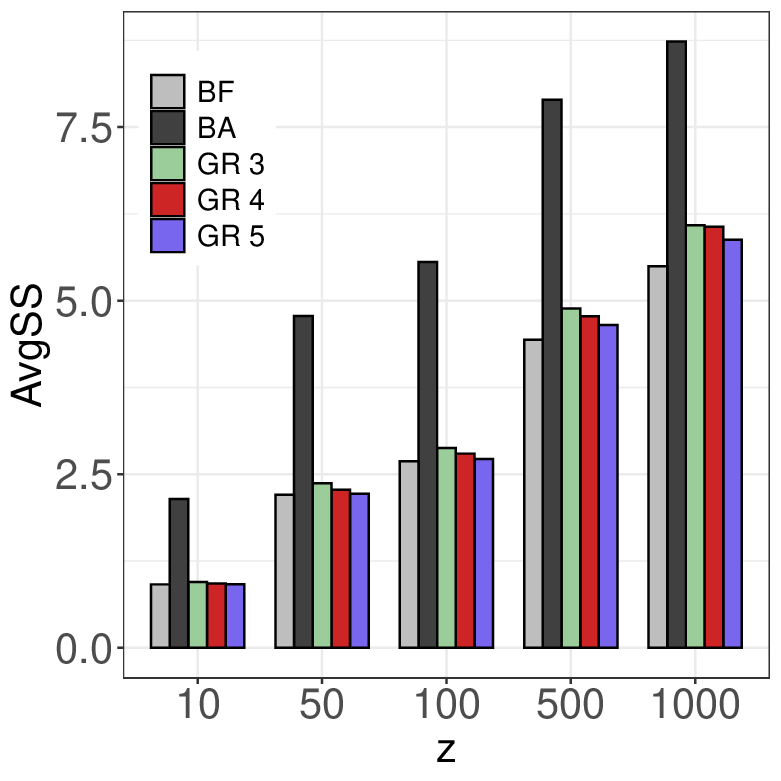}
        \caption{$d\!=\!342,\!472$ and $\ell=15$}
        \label{avgss1}
    \end{subfigure}
    \begin{subfigure}[b]{0.32\textwidth}
       \includegraphics[scale=0.27]{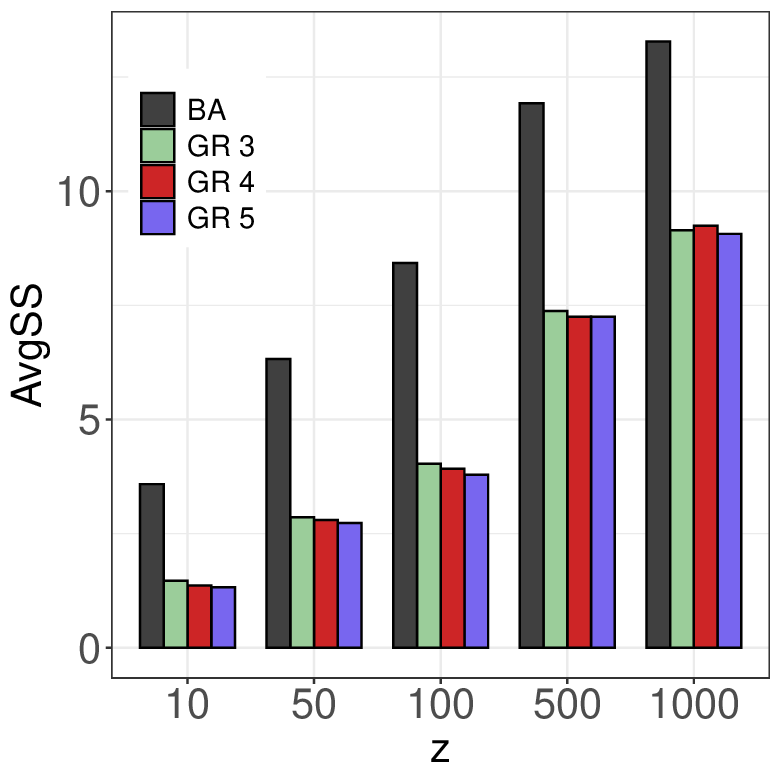}
        \caption{$d\!=\!342,\!472$ and $\ell=30$}
        \label{avgss2}
    \end{subfigure}
    \begin{subfigure}[b]{0.32\textwidth}
       \includegraphics[scale=0.27]{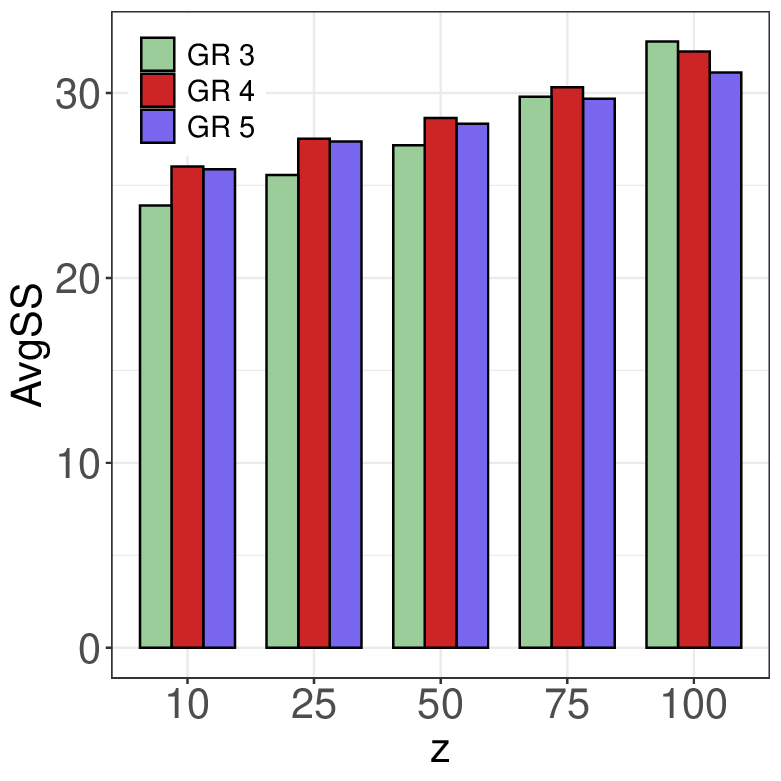}
         \caption{$d\!=\!8,\!238$ and $\ell=45$}
        \label{avgss3}
    \end{subfigure}
    
    \caption{AvgSS vs.~$z$ for (a) \texttt{FCi}. (b) \texttt{FCiCo} (\textsc{BF} did not produce results for any $z$ within 48 hours), and (c) \texttt{FSCiCo}  (\textsc{BF} and \textsc{BA} did not produce results for any $z$ within 48 hours. The results of \textsc{GR} for  $z>100$ are omitted because AvgSS~$>40$ which is close to $\ell=45$).} \label{avgss}
\end{figure}

\paragraph{Effectiveness.} Figures~\ref{avgre} and~\ref{avgss1} show that \textsc{GR} produced nearly-optimal solutions, significantly outperforming \textsc{BA}. In Figure~\ref{avgre1}, the solutions of \textsc{GR} 3 were no more than $9\%$ worse than the optimal, while those of \textsc{BA} were up to $95\%$ worse. In Figure~\ref{avgss1}, the average solution size of \textsc{BF} was $5.4$ vs.~$5.9$ and $9$, for the solution size of  \textsc{GR} 3 and \textsc{BA}, respectively. 

In Figures~\ref{avgss2} and~\ref{avgss3}, we examined the effectiveness of \textsc{GR} for larger $\ell$ values. Figure~\ref{avgss2} shows that the solution size of \textsc{GR} 3 was at least $31\%$ and up to $60\%$ smaller than that of \textsc{BA} on average, while Figure~\ref{avgss3} shows that the solution of \textsc{GR} 3 was comparable to that of \textsc{GR} 4 and 5.  We omit the results for \textsc{BF} from Figures~\ref{avgss2} and~\ref{avgss3}  and those for \textsc{BA} from Figure~\ref{avgss3}, as these algorithms did not produce results for all queries within 48  hours, for any $z$. This is because, unlike \textsc{GR}, \textsc{BF} does not scale well with $\ell$ and \textsc{BA} does not scale well with the solution size, as we will explain later. 

Note that increasing $\tau$ generally increases the effectiveness of \textsc{GR} as it computes more positions of wildcards per iteration. However, even with $\tau=3$, it remains competitive to \textsc{BF}.  

\paragraph{Efficiency.} Having shown that \textsc{GR} produced nearly-optimal solutions, we now show that it is comparable in terms of efficiency or faster than \textsc{BA} for synthetic data.  (\textsc{BF} was at least two orders of magnitude slower than the other methods on average and thus we omit its results.) The results for \texttt{NCVR} were qualitatively similar (omitted).  Figure~\ref{rta} shows that \textsc{GR} spent, on average, the same time for a query as \textsc{BA} did. However, it took significantly (up to $36$ times) less time than \textsc{BA} for queries with large solution size $k$. This can be seen from Figure~\ref{rttest}, which shows the time each query with solution size $k$ took; the results for \textsc{GR} 3 and 4 were similar and thus omitted. The reason is that \textsc{BA} updates the hypergraph every time a node is added into the solution, which is computationally expensive when $k$ is  large.  Figures~\ref{rtc} and~\ref{rtd} show that all algorithms scaled sublinearly with $d$ and with $z$, respectively. The increase with $d$ is explained by the time complexity of the methods. The slight increase with $z$ is because $k$  gets larger, on average, as $z$ increases (see Figure~\ref{avgss_vs_z} next, in which we also show the average solution size for the experiments in Figures~\ref{rta} and~\ref{rtc}). \textsc{GR} 3 and 4 performed similarly to each other, being faster than \textsc{GR} 5 in all experiments as expected: increasing $\tau$ from 3 or 4 to 5 trades-off effectiveness for efficiency.

\begin{figure}[!t]\hspace{-4mm}
    \begin{subfigure}[b]{0.25\textwidth}
       \includegraphics[trim={0 0 0 0},clip,scale=0.24]{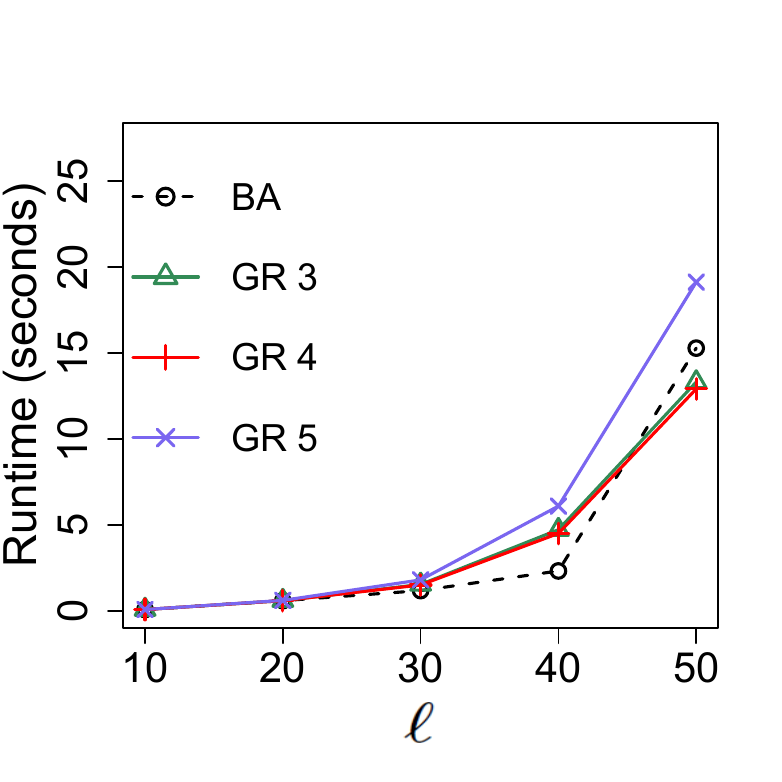}
        \caption{$d\!=\!6\!\cdot\! 10^6$ and $z\!=\!100$}
        \label{rta}
    \end{subfigure}
        \begin{subfigure}[b]{0.21\textwidth}
       \includegraphics[trim={0 0 0 0},clip,scale=0.28]{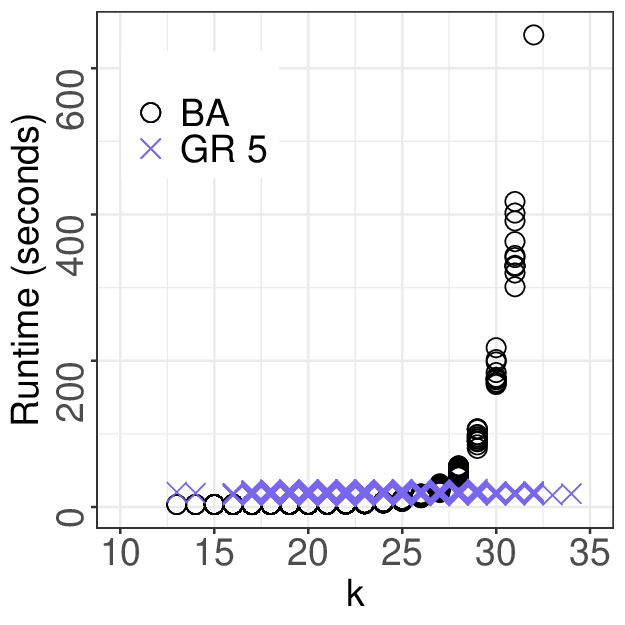}
        \caption{$\ell\!=\!50$ and $z\!=\!100$}
        \label{rttest}
    \end{subfigure}\hspace{+4mm}
    \begin{subfigure}[b]{0.24\textwidth}
       \includegraphics[scale=0.24]{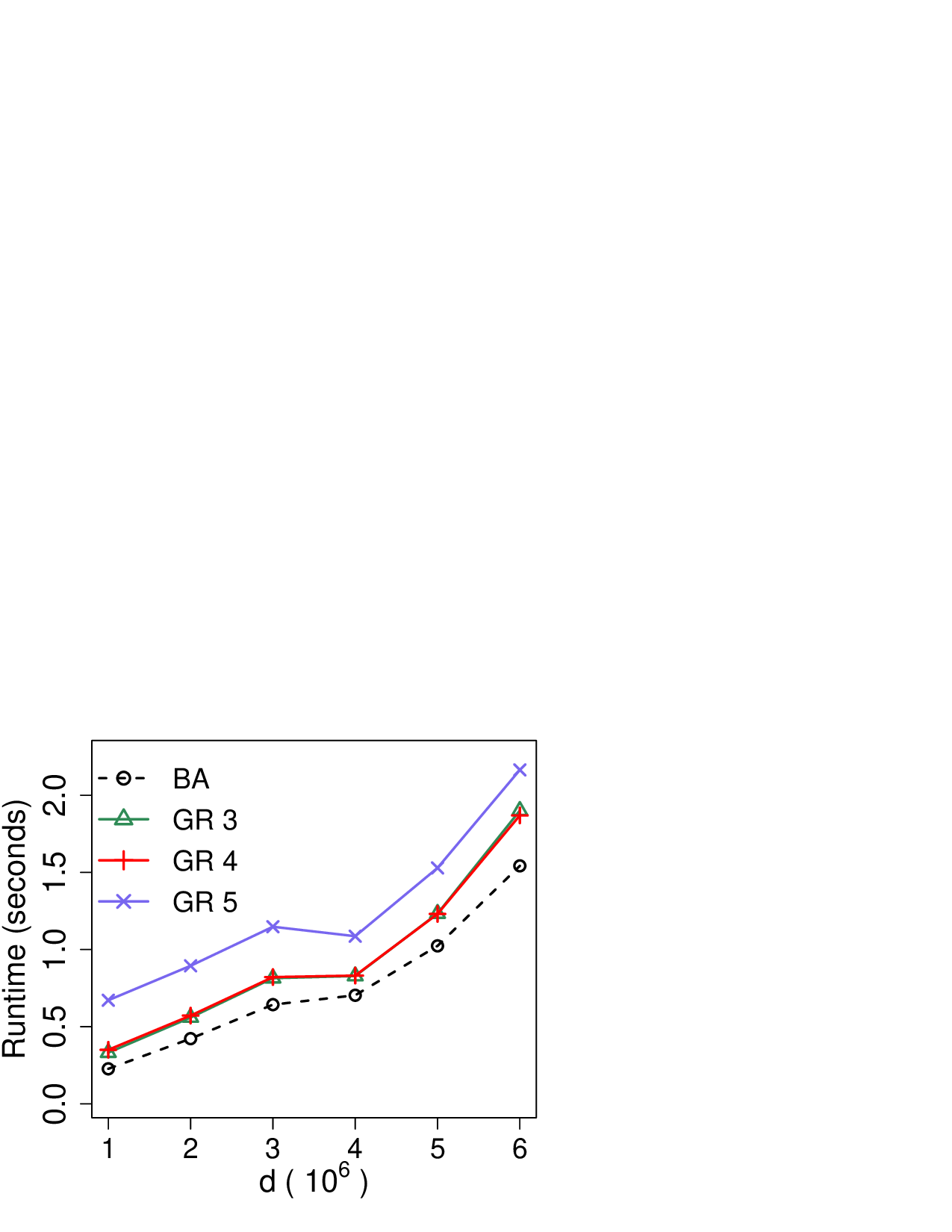}
        \caption{$\ell\!=\!30$ and $z\!=\!100$}
        \label{rtc}
    \end{subfigure}\hspace{+1mm}
    \begin{subfigure}[b]{0.25\textwidth}
       \includegraphics[trim={0 0 0 0},clip,scale=0.24]{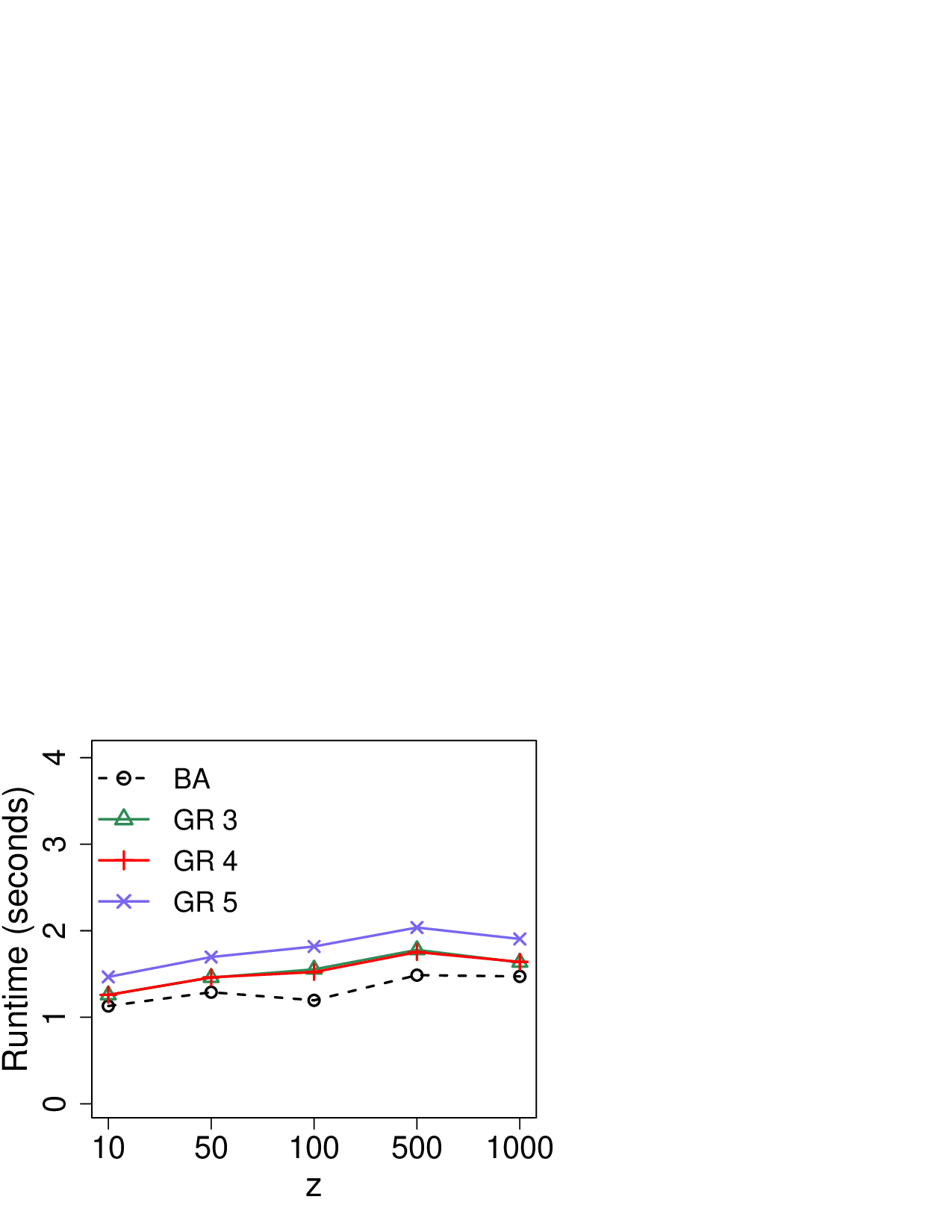}
        \caption{$d\!=\!6\!\cdot\!10^6$ and $\ell\!=\!30$}
        \label{rtd}
    \end{subfigure}
    ~ 
    
    \caption{Efficiency vs.~(a) $\ell$ for $\texttt{SYN}_{6.\ell}$, (b) $k$ for \texttt{SYN}, (c) $d$ for \texttt{SYN}$_{x.30}$, $x\in\{1\cdot 10^6, 2\cdot 10^6, \ldots, 6\cdot 10^6\}$, and (d) $z$ for $\texttt{SYN}_{6.30}$. The results of \textsc{BF} are omitted, because it was slower than other methods by at least two orders of magnitude on average.}  \label{runtime}
\end{figure}

\paragraph{Average Solution Size.} Figures \ref{avgss_vs_l}, \ref{avgss_vs_d}, and \ref{avgss_vs_z} show the average solution size in the experiments of Figure \ref{rta}, \ref{rtc}, and \ref{rtd}, respectively. 
Observe that the results are analogous to those obtained using the \texttt{NCVR} datasets: \textsc{GR} outperforms \textsc{BA} significantly. Also, observe in Figure \ref{avgss_vs_z} that the solution size of each tested algorithm gets larger, on average, as $z$ increases. 

\begin{figure}[!t]\centering
      \begin{subfigure}[b]{0.32\textwidth}
       \includegraphics[scale=0.29]{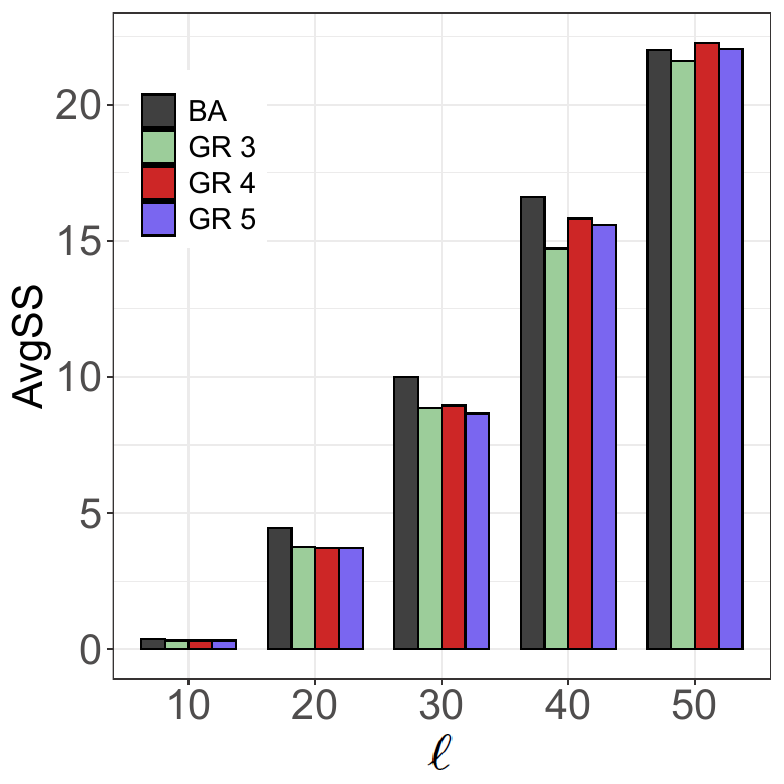}
        \caption{$d=6\cdot 10^6$, $z=100$}
        \label{avgss_vs_l}
    \end{subfigure}
    \begin{subfigure}[b]{0.32\textwidth}
       \includegraphics[scale=0.29]{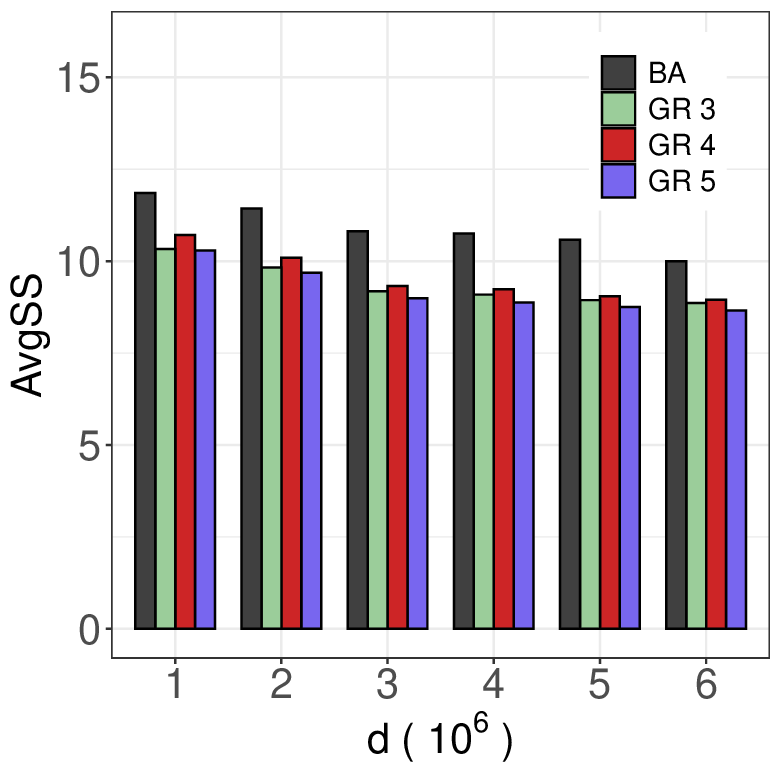}
        \caption{ $\ell=30$, $z=100$}
        \label{avgss_vs_d}
    \end{subfigure}
    \begin{subfigure}[b]{0.32\textwidth}
       \includegraphics[scale=0.29]{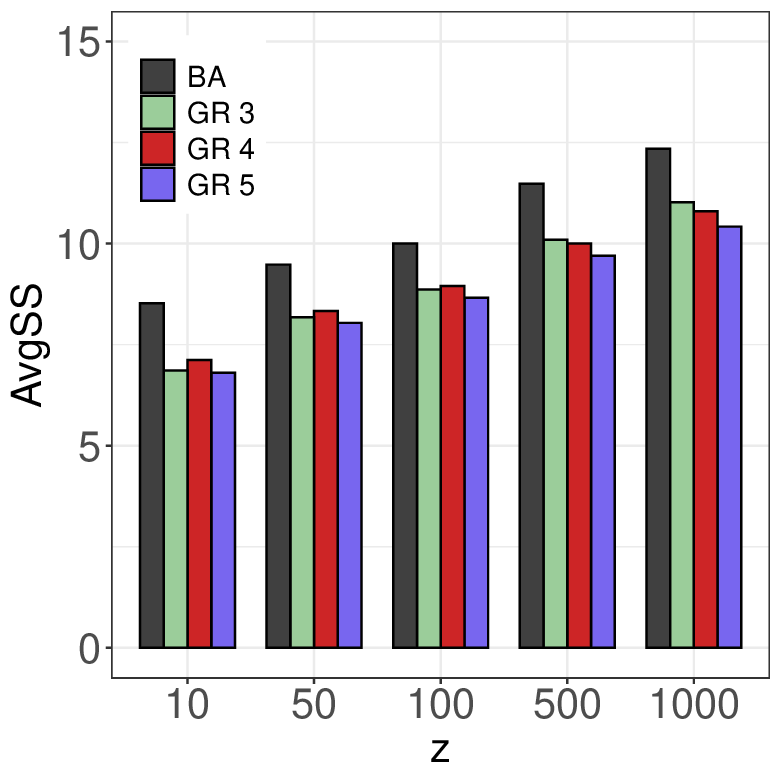}
        \caption{$d=6\cdot 10^6$, $\ell=30$}
        \label{avgss_vs_z}
    \end{subfigure}
    
    \caption{AvgSS vs.~(a) $\ell$ for $\texttt{SYN}_{6.\ell}$, (b) $d$ for \texttt{SYN}$_{x.30}$, $x\in \{1\cdot 10^6, 2\cdot 10^6, \ldots, 6\cdot 10^6\}$, and (c) $z$ for $\texttt{SYN}_{6.30}$.} \label{fig:avgss_appendix}
\end{figure}

\paragraph{Summary.} We have presented an extensive experimental evaluation demonstrating the effectiveness and efficiency of the proposed heuristic on real-world datasets used in record linkage, as well as on synthetic datasets. In the experiments that we have performed, the proposed heuristic: (I) found nearly-optimal solutions for varying values of $z$ and $\ell$, even when applied with a small $\tau$; and (II) scaled as predicted by the complexity analysis, requiring fewer than $3$ seconds for $d=6\cdot10^6$ in all tested cases. Our experimental results suggest that our methods can inspire solutions in large-scale real-world systems, where no sophisticated algorithms for \PPSM are being used. 

\section{Open Questions}\label{sec:finale}
The following questions of theoretical nature remain unanswered:

\begin{enumerate}
    \item Can we improve on the exact $\cO(d\ell+(d \ell)^{k/3})$-time algorithm presented in~\cref{sec:exact} for \PPSM and $k=\cO(1)$ using fast matrix multiplication~\cite{DBLP:conf/soda/AlmanW21,DBLP:conf/issac/Gall14a} or show that algebraic techniques cannot help, e.g., via the \textsc{$(c,k)$-Hyperclique} problem?
    \item Can we improve on the $\cO(2^{\ell/2}(2^{\ell/2}+\tau)\ell)$-time and  $\cO(2^{\ell}d^2/\tau^2+2^{\ell/2}d)$-space trade-off presented in~\cref{sec:DS}
    for the data structure answering $q,z$ \PPSM queries?
\end{enumerate}

\bibliographystyle{plainurl}
\bibliography{references}

\end{document}